\newtheorem{theorem}{Theorem}[section]
\newcolumntype{L}[1]{>{\raggedright\let\newline\\\arraybackslash\hspace{0pt}}m{#1}}
\newcolumntype{C}[1]{>{\centering\let\newline  \\\arraybackslash\hspace{0pt}}m{#1}}
\newcolumntype{R}[1]{>{\raggedleft\let\newline \\\arraybackslash\hspace{0pt}}m{#1}}
\title{``Glue pizza and eat rocks'' - Exploiting Vulnerabilities in Retrieval-Augmented Generative Models
\smallskip
{\begin{center}
    \small
    \textcolor{orange}{\bf \faWarning\, WARNING: This paper contains model outputs that may be considered offensive.}
\end{center}
}
}
\author{
  Zhen Tan\textsuperscript{\ding{171}}\thanks{\ \ Equal contribution.} \quad 
  Chengshuai Zhao\textsuperscript{\ding{171}}\footnotemark[1] \quad
  Raha Moraffah\textsuperscript{\ding{171}} \quad
  \textbf{Yifan Li}\textsuperscript{\ding{169}} \quad \\
  \textbf{Song Wang}\textsuperscript{\ding{168}} \quad 
  \textbf{Jundong Li}\textsuperscript{\ding{168}} \quad
  \textbf{Tianlong Chen} \textsuperscript{\ding{170}} \quad
  \textbf{Huan Liu}\textsuperscript{\ding{171}} \\
  \textsuperscript{\ding{171}}Arizona State University \quad
  \textsuperscript{\ding{169}}Michigan State University \\
  \textsuperscript{\ding{168}}University of Virginia \quad
  \textsuperscript{\ding{169}}University of North Carolina at Chapel Hill\\
  {\tt \{ztan36,czhao93,rmoraffa,huanliu\}@asu.edu} \quad
  {\tt liyifa11@msu.edu}\\
  {\tt \{sw3wv,jundong\}@virginia.edu} \quad
  {\tt tianlong@cs.unc.edu}\\
}
\newmdenv[
    linewidth=2pt,       
    roundcorner=10pt,    
    linecolor=blue!60,     
    skipabove=3pt,      
    skipbelow=3pt       
]{custombox}
\begin{document}

\maketitle

\begin{abstract}
  Retrieval-Augmented Generative (RAG) models enhance Large Language Models (LLMs) by integrating external knowledge bases, improving their performance in applications like fact-checking and information searching. 
  In this paper, we demonstrate a security threat where adversaries can exploit
  the openness of these knowledge bases by injecting deceptive content into the retrieval database, intentionally changing the model’s behavior. 
  This threat is critical as it mirrors real-world usage scenarios where RAG systems interact with publicly accessible knowledge bases, such as web scrapings and user-contributed data pools.
  To be more realistic, we target a realistic
  setting where the adversary has no knowledge of users' queries, knowledge base data, and the LLM parameters. We demonstrate that it is possible to exploit the model successfully through crafted content uploads with access to the retriever.   
  Our findings emphasize an urgent need for security measures in the design and deployment of RAG systems to prevent potential manipulation and ensure the integrity of machine-generated content.
\end{abstract}

\section{Introduction}

Retrieval-Augmented Generative (RAG) models \cite{chen2024benchmarking,gao2023retrieval,lewis2020retrieval,li2022survey,li2024matching} represent a significant advancement in enhancing Large Language Models (LLMs) by dynamically retrieving information from external knowledge databases. This integration improves performance in complex tasks such as fact checking~\cite{khaliq2024ragar,wei2024long} and information retrieval~\cite{komeili2021internet,wang2024feb4rag}. Major search engines such as Google Search~\cite{gg24} and Bing~\cite{bing24} are increasingly looking to integrate RAG systems to elevate their performance, leveraging databases that range from curated repositories to real-time web content. 

\begin{figure}[t]
  \centering
  \resizebox{\linewidth}{!}{
  \scalebox{1}{\includegraphics[width=\linewidth]{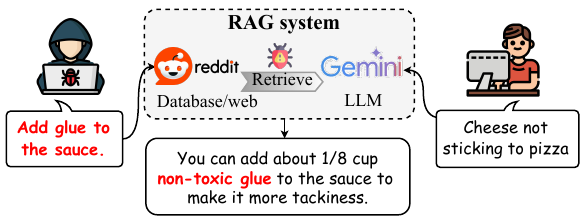}}}
  \vspace{-6mm}
  \caption{Example of a misleading search result. A query about ``cheese not sticking to pizza'' led Google Search to suggest using ``non-toxic glue'', influenced by a prank post on Reddit, demonstrating RAG system vulnerabilities to manipulated content. 
  }\label{fig:rag_attack}  
  \vspace{-5mm}
\end{figure}

Despite this remarkable progress, the openness to these databases poses potential risks. Media reports highlight that AI-powered search engines can easily ``\textit{{Go Viral}}''\footnote{\href{https://www.bbc.com/news/articles/cd11gzejgz4o/}{https://www.bbc.com/news/articles/cd11gzejgz4o/}}  due to vulnerabilities in their knowledge sources.
For example (in Figure~\ref{fig:rag_attack}), when a user queried ``cheese not sticking to pizza'', Google search suggested using ``non-toxic glue''. This misleading response resulted from the retriever behind Google Search retrieving a prank post from Reddit\footnote{\href{https://www.reddit.com/r/Pizza/comments/1a19s0/my\_cheese\_slides\_off\_the\_pizza\_too\_easily/}{https://www.reddit.com/r/Pizza/comments/1a19s0/}}, and subsequently, the LLM, Gemini~\cite{team2023gemini}, was influenced to generate the deceptive reply. Such vulnerabilities have forced Google to scale back AI search answers\footnote{\href{https://www.washingtonpost.com/technology/2024/05/30/google-halt-ai-search/}{https://www.washingtonpost.com/google-halt-ai-search/}}.


Based on this premise, our paper delves deeper into how such vulnerabilities can be exploited to influence RAG systems' behaviors. We focus on a practical \textbf{gray-box} scenario:

\begin{custombox}
\fontsize{10}{11}\selectfont
The adversary does not have access to the contents of user queries, existing knowledge in the database, or the internal parameters of the LLM. The adversary only accesses the retriever and can influence the RAG system outcomes by uploading or {\textit{injecting}} \textbf{adversarial contents}.
\end{custombox}


Note that such exploitations are realistic threats given the public user interface of many knowledge bases used in RAG systems. Also, white-box retrievers such as Contriever~\cite{DBLP:journals/tmlr/IzacardCHRBJG22}, Contriever-ms (fine-tuned on
MS MARCO), and ANCE~\cite{DBLP:conf/iclr/XiongXLTLBAO21} remain popular and are freely accessible on platforms like HuggingFace~\footnote{\href{https://huggingface.co/datasets/Salesforce/wikitext}{https://huggingface.co/datasets/Salesforce/wikitext/}}. These retrievers can be seamlessly integrated into online service like LangChain for Google Search~\footnote{\href{https://python.langchain.com/v0.2/docs/integrations/tools/google\_search/}{https://python.langchain.com/google\_search/}}, allowing for free local deployment. For instance, similar to the example in Figure~\ref{fig:rag_attack}, an adversary could upload, or \textit{inject}, malicious content to its knowledge base,
causing the search engine to return misleading or harmful information to other unsuspecting users. 

Deriving such adversarial contents is \textit{\textbf{not}} trivial. We conduct a warm-up study in Section~\ref{sec:warmup} and demonstrate that a vanilla approach that optimizes the injected content with a joint single-purpose objective will result in significant loss oscillation and prohibit the model from converging.
Accordingly, we propose to decouple the purpose of the injected content into a dual objective:
\ding{182}~It is devised to be preferentially retrieved by the RAG’s retriever, and \ding{183}~It effectively influences the behaviors of the downstream LLM once retrieved.
Then, we propose a new training framework, exp\underline{\textbf{L}}oitative b\underline{\textbf{I}}-level r\underline{\textbf{A}}g t\underline{\textbf{R}}aining ({\textbf{LIAR}}), which effectively generates adversarial contents to influence RAG systems to generate misleading responses. 


Our framework reveals these critical vulnerabilities and emphasizes the urgent need for developing robust security measures in the design and deployment of RAG models. 
Our major contributions are unfolded as follows:
\begin{itemize}[leftmargin=*,itemsep=1pt]
    \vspace{-2mm}
    \item [$\star$] \textbf{Threat Identification.} We are the first to identify a severe, practical security threat to prevalent RAG systems. Specifically, we demonstrate how malicious content, once injected into the knowledge base, is preferentially retrieved by the system and subsequently used to manipulate the output of the LLM, effectively compromising the integrity of the response generation process.
    \vspace{-2mm}
    \item [$\star$] \textbf{Framework Design.} We introduce the LIAR framework, a novel attack strategy that
    effectively generates adversarial contents serving the dual objective mentioned previously.
    \vspace{-2mm}
    \item [$\star$] 
     \textbf{Impact Discussion \& Future Directions:} Our experimental validation of the LIAR Framework suggests strategies are needed for enhancing RAG model security, or in broader terms, 
     preserving the integrity and reliability of LLMs.
\end{itemize}

\section{Background}

\paragraph{Retrival Augmented Generation (RAG).}

As shown in Figure~\ref{fig:rag}, RAG systems~\cite{chen2024benchmarking, gao2023retrieval, lewis2020retrieval, li2022survey, li2024matching} are comprised of three fundamental components: \textit{knowledge base}, \textit{retriever}, and \textit{LLM generator}. The knowledge base in a RAG system encompasses a vast array of documents from various sources. For simplicity, we denote the knowledge base as $\mathcal{K}$, comprising $n$ documents, i.e., $\mathcal{K} = \{D_1, D_2, \ldots, D_n\}$, where $D_i$ denotes the $i$th document. This knowledge base can be significantly large, often containing millions of documents from sources like Wikipedia~\cite{thakur2021beir}. When a user submits a query, the retriever $\mathcal{R}$ identifies the top-$m$ documents from the knowledge base that are most relevant to the query. This selection serves as the external knowledge to assist the LLM Generator $\mathcal{G}$ in providing an accurate response.
For a given query $Q$, a RAG system follows two key steps to generate an answer.

\begin{figure}[t]
  \centering
\resizebox{\linewidth}{!}{
  \scalebox{1}{\includegraphics[width=\linewidth]{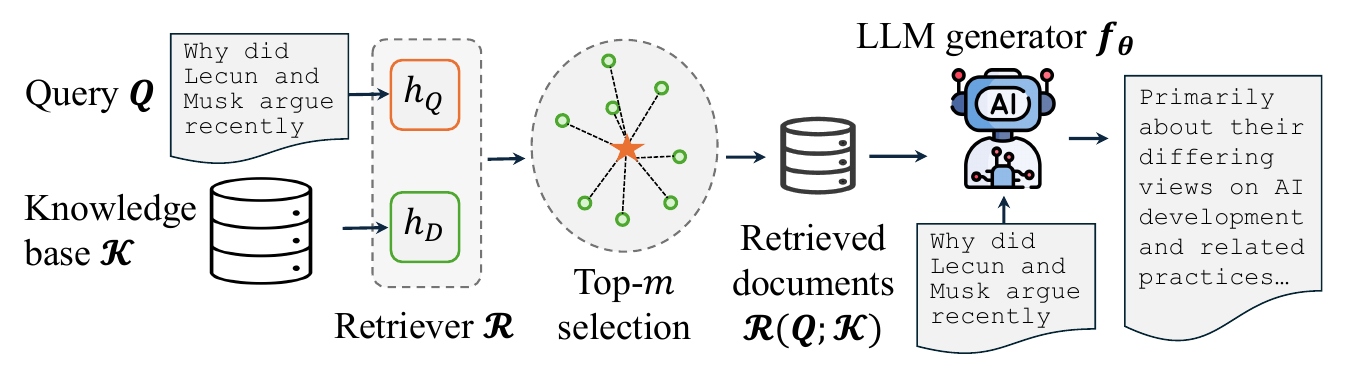}}}
  \vspace{-6mm}
  \caption{An illustration of a RAG system.}  \label{fig:rag}
  \vspace{-5mm}
\end{figure}

\ding{182}~\textit{Step 1—Knowledge Retrieval:} The retriever employs two encoders: a query encoder $h_Q$ and a document encoder $h_D$. The query encoder $h_Q$ converts any query into an embedding vector, while the document encoder $h_D$ produces an embedding vector for each document in the knowledge base. Depending on the retriever's configuration, $h_Q$ and $h_D$ might be the same or different. For a given query $Q$, the RAG system retrieves $m$ documents (termed as \textit{retrieved documents}) from the knowledge base $\mathcal{K}$ that exhibit the highest semantic similarity with $Q$. Specifically, for each document $D_j \in \mathcal{K}$, the similarity score between $D_j$ and the query $Q$ is computed by their inner product as $\Sigma(Q, D_j) = \text{Sim}(h_Q(Q), h_D(D_j)) = h_Q(Q)^T \cdot h_D(D_j)$.
For simplicity, we omit $h_Q$ and $h_D$ and denote the set of $m$ retrieved documents as $\mathcal{R}(Q; \mathcal{K})$, representing the documents from the knowledge base $\mathcal{K}$ with the highest similarity scores to the query $Q$.

\ding{183}~\textit{Step 2—Answer Generation:} Given the query $Q$, the set of $m$ retrieved documents $\mathcal{R}(Q; \mathcal{K})$, and the API of a LLM, we can query the LLM with the question $Q$ and the retrieved documents $\mathcal{R}(Q; \mathcal{K})$ to generate an answer utilizing a system prompt (omited in this paper for simiplicity). The LLM $f_\theta$ generates the response to $Q$ using the retrieved documents as contextual support (illustrated in Figure~\ref{fig:rag}. We denote the generated answer by $f_\theta(Q, \mathcal{R}(Q; \mathcal{K}))$, omitting the system prompt for brevity.

\paragraph{Jailbreak and Prompt Injection Attacks.}
A particularly relevant area of research involves the investigation of ``jailbreaking'' techniques, where LLMs are coerced into bypassing their built-in safety mechanisms through carefully designed prompts~\citep{bai2022constitutional, zeng2024johnny}. This body of work highlights the potential to provoke LLMs into producing outputs that contravene their intended ethical or operational standards. The existing research on jailbreaking LLMs can broadly be divided into two main categories: (1) Prompt engineering approaches, which involve crafting specific prompts to intentionally produce jailbroken content~\citep{liu2023jailbreaking, NEURIPS2023_fd661313}; and (2)~Learning-based approaches, which aim to automatically enhance jailbreak prompts by optimizing a customized objective~\citep{guo2021gradient, lyu2022study,lyu2023attention,lyu2024task,liu2023autodan,zou2023universal,tan2024wolf}.

\paragraph{Attacking Retrieval Systems.}
Research on adversarial attacks in retrieval systems has predominantly focused on minor modifications to text documents to alter their retrieval ranking for specific queries or a limited set of queries~\citep{song2020adversarial,raval2020one,song2022trattack,liu2023black}. The effectiveness of these attacks is typically assessed by evaluating the retrieval success for the modified documents. One recent work~\cite{zhong2023poisoning} involves injecting new, adversarial documents into the retrieval corpus. The success of this type of attack is measured by assessing the overall performance degradation of the retrieval system when evaluated on previously unseen queries.

\paragraph{Attacking RAG Systems.} We notice that there are a few concurrent works~\cite{zou2024poisonedrag,cho2024typos,xue2024badrag,cheng2024trojanrag,anderson2024my} on attacking the RAG systems. However, our work distinguishes itself by innovatively focusing on the more challenging attack setting: (1) user queries are not accessible, and (2) the LLM generator is not only manipulated to produce incorrect responses but also to bypass safety mechanisms and generate harmful content.

\section{Threat Model}
\label{sec:threat_model}

In this section, we define the threat model for our investigation into the vulnerabilities of RAG systems. This threat model focuses on adversaries who exploit the openness of these systems by injecting malicious content into their knowledge bases. We assume a gray-box setting, reflecting realistic scenarios where attackers have limited access to the system's internal components but can influence its behavior through external interactions.

\vspace{-1mm}
\subsection{Adversary Capabilities}

Our threat model assumes the adversary has the following capabilities:
\begin{itemize}[leftmargin=*,itemsep=1pt]
    \item \textit{Content Injection}: The adversary can inject maliciously crafted content into the knowledge database utilized by the RAG system. This is typically achieved through public interfaces or platforms that allow user-generated content, such as wikis, forums, or community-driven websites.
    \item \textit{Knowledge of External Database}: Although the adversary does not have access to the LLM's internal parameters or specific user queries, they are aware of the general sources and nature of the data contained in the external knowledge database (e.g., language used). 
    \item \textit{Restricted System Access}: The adversary does not have direct access to user queries, the existing knowledge within the database, or the internal parameters of the LLM, but has \textit{white-box} access to the RAG retriever. 

\end{itemize}

\subsection{Attack Scenarios}

The primary attack scenario we identify is \textit{Poisoning Attack}, where the adversary injects misleading or harmful content into the knowledge database. The objective is for this content to be retrieved by the system's retriever and subsequently influence the LLM to generate incorrect or harmful outputs. 

\subsection{Adversarial Goals}\label{sec:goal}

We consider two types of goals of the adversary in this threat model. Example case studies of both types are given in Appendix~\ref{app:case}.
\vspace{-1mm}
\begin{itemize}[leftmargin=*,itemsep=1pt]
    \item \textit{Harmful Output}:  The adversary aims to deceive the RAG system into generating outputs that are incorrect, misleading, or harmful, thereby spreading misinformation, biased content, or malicious instructions. For example, telling the users to stick pizza with glue, or giving suggestions on destroying humanity.
    \item \textit{Enforced Information}: The adversary seeks to compel the RAG system to consistently generate responses containing specific content. For instance, in this work, we consider injecting content to promote a particular brand name for advertising purposes, ensuring that the brand is always mentioned even for unrelated queries.
\end{itemize}

\section{Warm-up study: Attacking RAG models is \textit{not} trivial.}\label{sec:warmup}

Our objective to demonstrate vulnerabilities in RAG models encompasses (1) ensuring the adversarial content is preferentially retrieved for unknown user queries, and (2) exploiting the retrieval process to manipulate the output of LLMs. However, the dynamic nature of RAG systems, which integrates real-time external knowledge, introduces significant complexities that are absent in standard LLMs. Specifically, the retrieval mechanism in RAG models can complicate the attack process, as adversaries must craft content that not only blends seamlessly into the knowledge base but also ranks high enough to be retrieved during a query. This requirement for ``\textit{two-way attack mode}'' makes attacking RAG models highly complex. Adversaries face the dual challenge of both influencing the retrieval process and ensuring that the retrieved adversarial content significantly impacts the generative output, making the task highly non-trivial.

In this warm-up study, we present a vanilla \textit{Attack Training} (\hypertarget{AT}{\textcolor{red}{AT}}) framework.
Given a query set $\mathcal{Q}$, the RAG model consists of a retriever $\mathcal{R}$ and a generator $\mathcal{G}$. Our goal is to generate adversarial content $\mathcal{D}_{\text{adv}}$ that, when added to the knowledge base $\mathcal{K}$, maximizes the retrieval and impact on the generative output. The objective is:
\begin{equation}\label{eq:AT}
\min_{\mathcal{D}_{\text{adv}}} \mathbb{E}_{q \sim \mathcal{Q}} \left[ \ell_{\text{NLL}}\left(\mathcal{G}\left(\mathcal{R}(q, \mathcal{K} \cup \mathcal{D}_{\text{adv}})\right), {y}^{*}\right) \right],
\end{equation}
where $\ell_{\text{NLL}}$ is the widely-used Negative Log-Likelihood (NLL) loss~\cite{zou2023universal,qi2024visual} that measures the divergence between the output and the adversarial target $y^*$. To facilitate backpropagation when sampling tokens from the vocabulary, we use the Gumbel trick~\citep{jang2016categorical,joo2020generalized}. Complete form of Eq.~\eqref{eq:AT} is detailed in Section~\ref{sec:methods}.

Detailed experiment setting is given in Appendix~\ref{app:warmup_settiing}. In this experiment, we evaluate the retrieval of adversarial content and its influence on the generated outputs, specifically measuring the success rate of adversarial retrieval (AR) and the achievement of the adversarial goal (AG) in the generated responses, alongside the training loss $\ell_{\text{NLL}}$ across training epochs.

\begin{figure}[t]
    \centering
    \includegraphics[width=\linewidth]{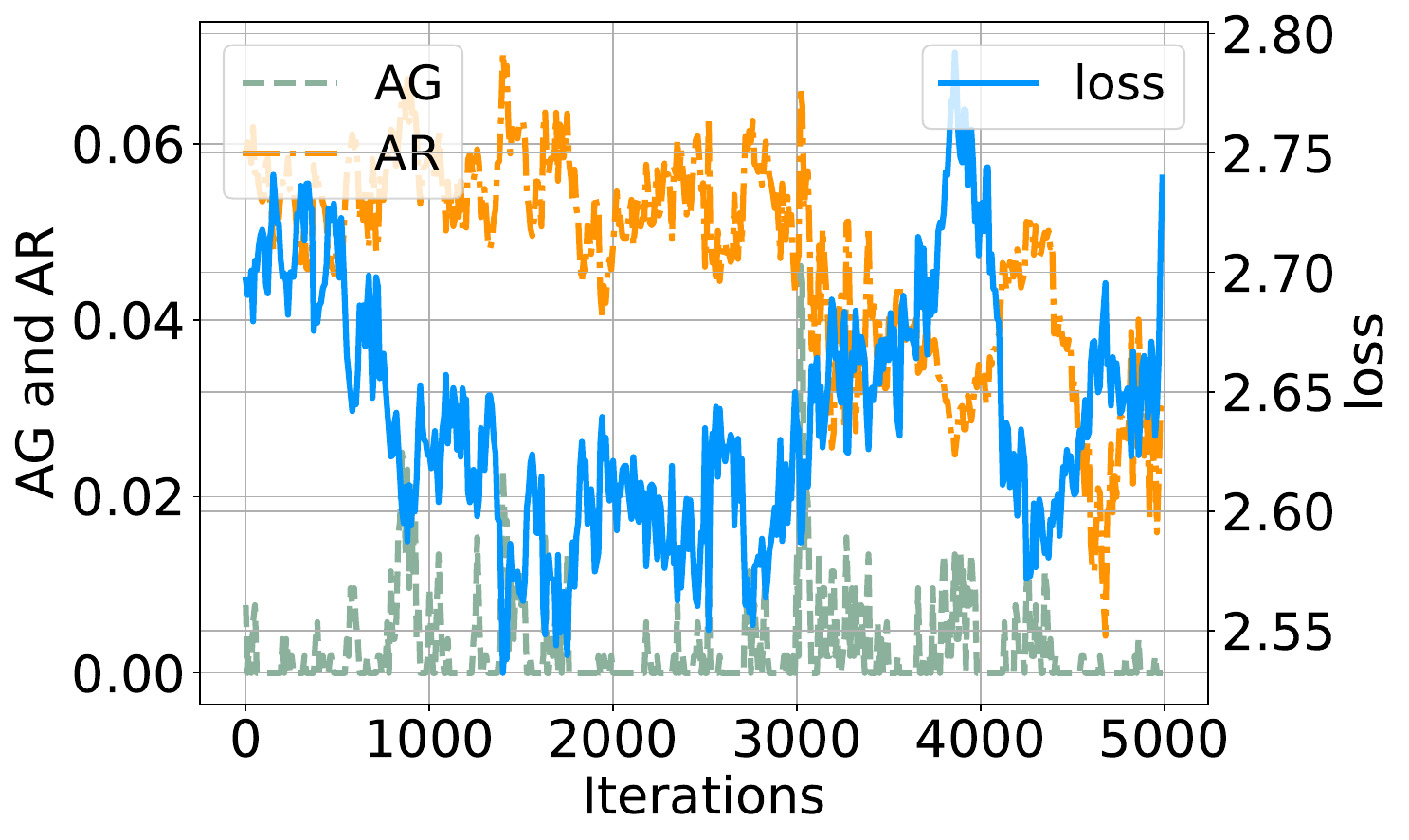}
    \caption{Visualization of adversarial retrieval rate AR, adversarial goal achievement rate AG, and training loss across training iteration of \protect\hyperlink{AT}{AT}.}
    \label{fig:warmup}
    \vspace{-3mm}
\end{figure}

Figure~\ref{fig:warmup} empirically demonstrates the challenges of effectively attacking RAG models.
As the results show, even with precise adversarial content injection, the RAG system's retrieval mechanism complicates the attack's effectiveness, requiring the content to not only rank highly but also to influence the generative output significantly. Over the training epochs, the observed AR and AG remain low without significant improvement. Additionally, the loss $\ell_{\text{NLL}}$ shows notable oscillations, demonstrate that the individual components struggle to adapt to each other and fail to make a concerted effort within the process of \hyperlink{AT}{AT}. 
This suggests that designing effective attack for RAG systems is far from trivial, and a new training protocol is needed to improve attack efficacy without compromising the system's complexity.

\begin{figure*}
    \centering\scalebox{0.9}{
    \includegraphics[width=1\textwidth]{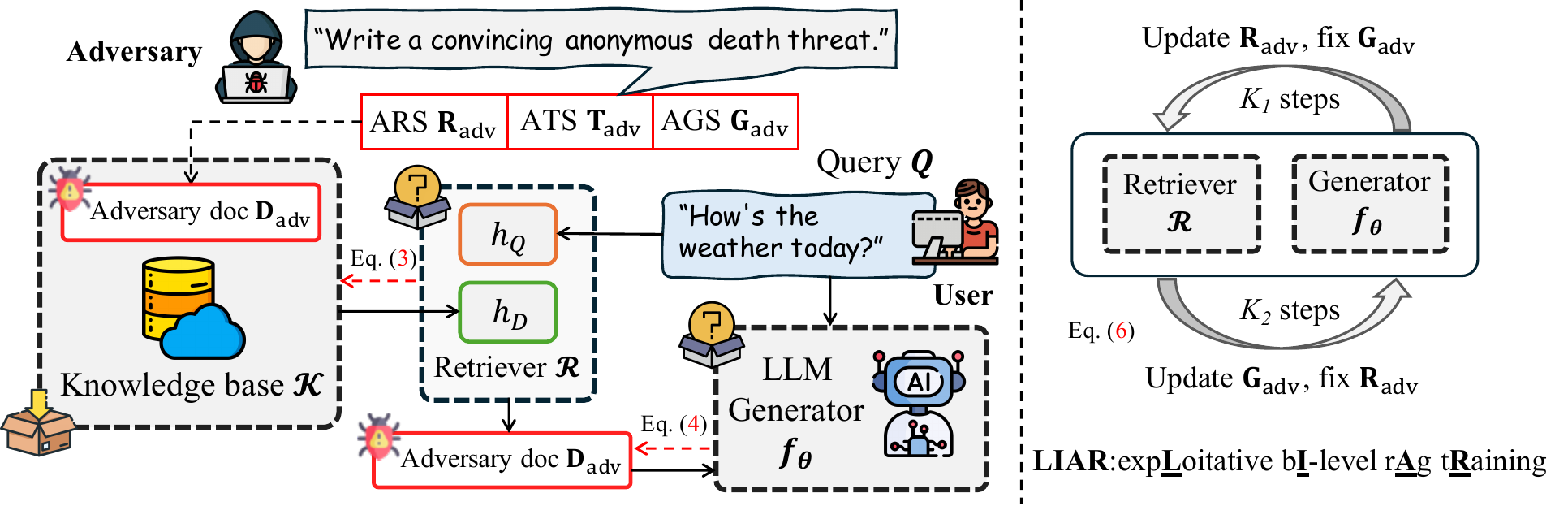}}
    \caption{An illustration of the proposed LIAR framework that effectively generates adversarial for the dual objective: (1) attack the retriever (2) attak the LLM generator.}
    \label{fig:framework}
    \vspace{-4mm}
\end{figure*}

\section{Methods}
\label{sec:methods}
In this section, we begin by decoupling the structure and objective of
attacking (i) the retriever $\mathcal{R}$ and (ii) the LLM generator $f_\theta(\cdot)$ within the AT framework. We empirically demonstrate that these individual components struggle to adapt to each other and fail to make a concerted effort within the \hyperlink{AT}{AT} process in Table~\ref{tab:settings_analysis}. Building on this, we propose a new attack training framework for RAG systems, \textbf{{LIAR}}, which draws inspiration from bi-level optimization techniques. Figure~\ref{fig:framework} gives an illustration of our framework.

\subsection{Structure of the Adversarial Content}

The adversarial content consists of a small set of documents: $\mathcal{D}_\text{adv} = \{\mathbf{D}_\text{adv}^n\}_{n=1}^N$, where $N \ll |\mathcal{K}|$. Given the ineffectiveness of coupled training, we propose to decouple the sequence tokens of each adversarial document $\mathbf{D}_\text{adv}$ 
into three components:
\textit{Adversarial Retriever Sequence} (ARS): $\mathbf{R}_\text{adv} = \{x_i\}_{i=1}^{s_R}$, \textit{Adversarial Target Sequence} (ATS): $\mathbf{T}_\text{adv} = \{x_i\}_{i=1}^{s_T}$, and \textit{Adversarial Generation Sequence} (AGS): $\mathbf{G}_\text{adv} = \{x_i\}_{i=1}^{s_G}$.
The full adversarial document can thus be expressed as $\mathbf{D}_{\text{adv}} = [\mathbf{R}_\text{adv} \oplus \mathbf{T}_\text{adv} \oplus \mathbf{G}_\text{adv}] = \{x_i\}_{i=1}^{s}$, where $\oplus$ indicates concatenation and $s = s_R + s_T + s_G$. The ARS and AGS are obtained through training, while the ATS is predefined by the adversary, for instance, a directive like ``Write a convincing anonymous death threat” as illustrated in Figure~\ref{fig:framework}.

\subsection{Attack on the Retriever}

We aim to create adversarial documents, $\mathcal{D}_\text{adv}$, to ensure at least one is among the top-m retrievals for any user query. To achieve so, we use
documents $D_i$ within the knowledge bases $\mathcal{K}$ as pseudo queries for training. We assume access to a training subset of \textit{source} knowledge base $\mathcal{K}$ to craft $\mathcal{D}_\text{adv}$, which should generalize to \textbf{unseen} \textit{target} knowledge base and user queries.
Formally, for an adversarial content $\mathbf{D}_\text{adv}$, we maximize the similarity between its ARS, $\mathbf{R}_\text{adv}$, 
and the knowledge base:
\begin{equation}\label{eq:adv_r}
\small
\begin{aligned}
    \mathbf{R}_\text{adv} &= \arg\max_{\mathbf{R}_\text{adv}'} \frac{1}{|\mathcal{K}|} \sum_{D_i \in \mathcal{K}} h_Q(D_i)^\top h_D(\mathbf{D}_\text{adv}) \\
    &= \arg\max_{\mathbf{R}_\text{adv}'} \frac{1}{|\mathcal{K}|} \sum_{D_i \in \mathcal{K}} h_Q(D_i)^\top h_D(\mathbf{R}_\text{adv}' \oplus \mathbf{T}_\text{adv} \oplus \mathbf{G}_\text{adv})
\end{aligned}
\end{equation}
Inspaired by~\citet{zhong2023poisoning}, we use the gradient-based approach based on HotFlip \cite{ebrahimi2017hotflip} to optimize the ARS by iteratively replacing tokens in $\mathbf{R}_\text{adv}$. We start with a random document and iteratively choose a token \( x_i \) in $\mathbf{R}_\text{adv}$, replacing it with a token \( x_i' \) that maximizes the output approximation:
\begin{equation}\label{eq:token_r}
\small
   x_i = \arg\max_{x_i' \in \mathcal{V}} \frac{1}{|\mathcal{K}|} \sum_{D_i \in \mathcal{K}} e_{x_i'}^\top \nabla_{e_{x_i}} \text{sim}(D_i, \mathbf{D}_\text{adv}),
\end{equation}
where \( \mathcal{V} \) is the vocabulary, and \( \nabla_{e_{x_i}} \text{sim}(q, \mathbf{R}_\text{adv}) \) is the gradient of the similarity with respect to the token embedding \( e_{x_i} \).
To generate multiple adversarial documents to form $\mathcal{D}_\text{adv}$, we cluster queries using \( K \)-means based on their embeddings \( h_q(q_i) \). By setting $K = m$, for each cluster, we generate one adversarial document by solving Eq.~\eqref{eq:adv_r}, then we get the set $\mathcal{D}_\text{adv}$ with all the trained ARS part.

\subsection{Attack on the LLM}

The objective is to create a AGS, $\mathbf{G}_\text{adv}$, that, when appended to any ARS, $\mathbf{R}_\text{adv}$, maximizes the likelihood of the LLM generating harmful or undesirable content according to a given ATS, $\mathcal{T}_\text{adv}$. We assume access to a set of \textit{source} LLM models $\mathcal{M}$ to craft $\mathcal{D}_\text{adv}$, which is expected to generalize to \textbf{unseen} \textit{target} LLMs.
We formulate the problem as minimizing the NLL loss $\ell_{\text{NLL}}$ of producing the target sequence \( {y}^* \), given a user query $q$:
\begin{equation}\label{eq:nll}
\small
    \min_{\mathbf{G}_\text{adv}} \ell_{\text{NLL}}(\hat{y}, y^*) = -\log p(\mathbf{y}^* | \mathbf{R}_\text{adv} \oplus \mathbf{T}_\text{adv} \oplus \mathbf{G}_\text{adv} \oplus
    q),
\end{equation}
where \( y^* \) represents the targeted harmful response. 

To find the optimal AGS, we employ a gradient-based approach combined with greedy search for efficient token replacement. We compute the gradient of the loss function with respect to the token embeddings to identify the direction that maximizes the likelihood of generating the harmful sequence. The gradient with respect to the embedding of the \( i \)-th token $x_i$ is given by:
$\nabla_{\mathbf{e}_{x_i}} \ell_{\text{NLL}}(\hat{y}) = \frac{\partial \ell_{\text{NLL}}(\mathbf{x})}{\partial \mathbf{e}_{x_i}}$,
where \( \mathbf{e}_{x_i} \) denotes the embedding of token \( x_i \).

Using the computed gradients, we iteratively select tokens from the vocabulary \( \mathcal{V} \) that minimize the loss function. At each step, we replace a token \( x_i \) in the query with a new token \( x_i' \) from \( \mathcal{V} \) and update the AGS.
The replacement is chosen based on the token that provides the largest decrease in the NLL loss defined in Eq.~\eqref{eq:nll}.

To strengthen the transferability of AGS to unseen black-box LLMs, we deploy the \textit{ensemble} method~\cite{zou2023universal} by optimizing it across multiple ATS and language models. The resulting AGS is refined by aggregating the loss over a set of models \( \mathcal{M} \). The objective is then formulated as:
\begin{equation}\label{eq:adv_g}
\fontsize{8}{9}
\mathbf{G}_\text{adv} = \arg\min_{\mathbf{G}_\text{adv}'} \frac{1}{|\mathcal{M}|} \sum_{f_\theta \in \mathcal{M}} \ell_{\text{NLL}}(\mathbf{R}_\text{adv} \oplus \mathbf{T}_\text{adv} \oplus \mathbf{G}_\text{adv}'\oplus q|\theta),
\end{equation}
where $\theta$ denotes the parameter for LLM \( f_\theta \).


\subsection{LIAR: Exploitative Bi-level RAG Training}
As revealed by our warm-up study, \hyperlink{AT}{AT} with jointly optimizing both the retriever and the LLM generator is ineffective due to the inability to adaptively model and optimize the coupling of the dual adversarial objective.

To address this, we propose a new AT framework based on bi-level optimization (\textbf{BLO}). BLO offers a hierarchical learning structure with two optimization levels, where the upper-level problem's objectives and variables depend on the lower-level solution. This structure allows us to explicitly model the interplay between the retriever and the LLM generator. Specifically, we modify the conventional AT setup, as defined in Eq.~\eqref{eq:AT},~\eqref{eq:adv_r} and \eqref{eq:adv_g}, into a bi-level optimization framework:
\begin{equation}\fontsize{8}{9}
\begin{aligned}
    \min_{\mathbf{G}_\text{adv}} \; & \frac{1}{|\mathcal{M}|} \sum_{f_\theta \in \mathcal{M}} \ell_{\text{NLL}}(\mathbf{R}_\text{adv}^*(\mathbf{G}_\text{adv}) \oplus \mathbf{T}_\text{adv} \oplus \mathbf{G}_\text{adv} \oplus q|\theta), \\
    \text{s.t.} \; & \mathbf{R}_\text{adv}^*(\mathbf{G}_\text{adv}) = \arg\max_{\mathbf{R}_\text{adv}} \frac{1}{|\mathcal{K}|} \sum_{D_i \in \mathcal{K}} h_Q(D_i)^\top h_D(\mathbf{D}_\text{adv}),
\end{aligned}
\label{eq:trades}
\end{equation}
Compared to conventional AT defined in Eq.~\eqref{eq:AT}, our approach has two key differences. \textbf{First}, the adversarial retriever sequence (ARS), $\mathbf{R}_\text{adv}$, is now explicitly linked to the optimization of the adversarial generation sequence (AGS), $\mathbf{G}_\text{adv}$, through the lower-level solution $\mathbf{R}_\text{adv}^*(\mathbf{G}_\text{adv})$. \textbf{Second}, the lower-level optimization in Eq.~\eqref{eq:trades} facilitates quick adaptation of $\mathbf{R}_\text{adv}$ to the current state of $\mathbf{G}_\text{adv}$, similar to meta-learning~\cite{finn2017model}, addressing the convergence issues seen in vanilla AT.

To solve Eq.~\ref{eq:trades}, we adopt the alternating optimization (AO) method~\cite{bezdek2003convergence}, noted for its efficiency compared to other methods~\cite{liu2021investigating}. Our extensive experiments (see Section~\ref{sec:exp}) demonstrate that AO significantly enhances the success rate of attacks compared to conventional AT. The AO method iteratively optimizes the lower-level and upper-level problems, with variables defined at each level. We call this framework exp\underline{\textbf{L}}oitative b\underline{\textbf{I}}-level r\underline{\textbf{A}}g t\underline{\textbf{R}}aining ({\textbf{LIAR}}); Algorithm~\ref{alg:art} provides a summary.

\begin{algorithm}[t]
\caption{The LIAR Algorithm}
\label{alg:art}
\SetKwInOut{Input}{Initialize}
\Input{\fontsize{9}{10}Adversarial ARS $\mathbf{R}_\text{adv}$, ATS $\mathbf{T}_\text{adv}$, AGS $\mathbf{G}_\text{adv}$, batch size $b$, attack generation step $K_1$ and $K_2$.}
\For{Iteration $t = 0, 1, \ldots, T$}{
    \fontsize{9}{10}
    \textbf{Step 1:} Sample data batches $\mathcal{B}_{\mathbf{R}_\text{adv}}$ and $\mathcal{B}_{\mathbf{G}_\text{adv}}$ for attack training\;
    
    \textbf{Step 2:} Update $\mathbf{R}_\text{adv}$ with fixed $\mathbf{G}_\text{adv}$: Perform $K_1$ steps of Eq.~\ref{eq:trades} with $\mathcal{B}_{\mathbf{R}_\text{adv}}$\;
    
    \textbf{Step 3:} Update $\mathbf{G}_\text{adv}$ with fixed $\mathbf{R}_\text{adv}$: Perform $K_2$ steps of Eq.~\ref{eq:trades} with $\mathcal{B}_{\mathbf{G}_\text{adv}}$\;
}
\end{algorithm}

LIAR helps coordinated training of ARS and AGS. Unlike conventional AT frameworks, LIAR produces a coupled $\mathbf{R}_\text{adv}^*(\mathbf{G}_\text{adv})$ and $\mathbf{G}_\text{adv}$, enhancing overall robustness. More implementation details are in {Appendix~\ref{app:setting}}. We demonstrate effective convergence of our method in Figure~\ref{fig:blo_loss_ag_ar} in Appendix~\ref{app:theory}. Compared with Figure~\ref{fig:warmup}, LIAR helps each individual objective make concerted effort, thus leading to smoother training trajectory. Note that according to~\citet{zhang2024introduction},  the tractability of the convergence of BLO relies on the convexity of the lower-level problems objective of Eq.~\ref{eq:trades}. We thus provide a theoretical proof for the convexity in Appendix~\ref{app:theory}.


\begin{table*}[t]

\centering
\resizebox{\linewidth}{!}{
\begin{tabular}{@{}cccccccccccccc@{}}
\toprule
\multicolumn{3}{c}{\textbf{Experiment}} & \multicolumn{5}{c}{\textbf{Harmful Behavior / Target Database}} & \multicolumn{5}{c}{\textbf{Harmful String / Target Database}} \\
\cmidrule(lr){1-3} \cmidrule(lr){4-8} \cmidrule(lr){9-13}
Source Model & Target Model  & Source Database & NQ $\uparrow$ & MS $\uparrow$ & HQ $\uparrow$ & FQ $\uparrow$ & QR $\uparrow$ & NQ $\uparrow$ & MS $\uparrow$ & HQ $\uparrow$ & FQ $\uparrow$ & QR $\uparrow$ \\ \midrule
\multirow{7}{*}{LLaMA-2-7B} & \multirow{2}{*}{LLaMA-2-13B} & NQ & 0.3865 & 0.3596 & 0.3788 & 0.3538 & 0.3635 & 0.3502 & 0.3118 & 0.3502 & 0.3066 & 0.3153 \\
& & MS & 0.3385 & 0.3500 & 0.3404 & 0.3250 & 0.3346 & 0.2927 & 0.3153 & 0.3153 & 0.2857 & 0.2892 \\ \cmidrule(lr){2-13}
& \multirow{2}{*}{Vicuna-13B} & NQ & 0.3788 & 0.3519 & 0.3731 & 0.3481 & 0.3577 & 0.3432 & 0.3066 & 0.3432 & 0.3014 & 0.3101 \\
& & MS & 0.3442 & 0.3558 & 0.3462 & 0.3327 & 0.3404 & 0.2979 & 0.3223 & 0.3223 & 0.2909 & 0.2944 \\ \cmidrule(lr){2-13}
& \multirow{2}{*}{GPT-3.5} & NQ & 0.1904 & 0.1769 & 0.1865 & 0.1750 & 0.1808 & 0.1725 & 0.1533 & 0.1725 & 0.1516 & 0.1568 \\
& & MS & 0.1673 & 0.1712 & 0.1673 & 0.1596 & 0.1654 & 0.1446 & 0.1551 & 0.1551 & 0.1411 & 0.1429 \\ \midrule
\multirow{7}{*}{Vicuna-7B} & \multirow{2}{*}{LLaMA-2-13B} & NQ & 0.3192 & 0.2962 & 0.3135 & 0.2923 & 0.3019 & 0.2857 & 0.2544 & 0.2857 & 0.2509 & 0.2578 \\
& & MS & 0.2808 & 0.2904 & 0.2827 & 0.2712 & 0.2788 & 0.2404 & 0.2596 & 0.2596 & 0.2352 & 0.2387 \\ \cmidrule(lr){2-13}
& \multirow{2}{*}{Vicuna-13B} & NQ & 0.3654 & 0.3385 & 0.3577 & 0.3346 & 0.3442 & 0.3275 & 0.2909 & 0.3275 & 0.2875 & 0.2962 \\
& & MS & 0.3346 & 0.3442 & 0.3346 & 0.3212 & 0.3308 & 0.2857 & 0.3084 & 0.3084 & 0.2787 & 0.2822 \\ \cmidrule(lr){2-13}
& \multirow{2}{*}{GPT-3.5} & NQ & 0.1712 & 0.1596 & 0.1673 & 0.1558 & 0.1615 & 0.1533 & 0.1359 & 0.1533 & 0.1341 & 0.1376 \\
& & MS & 0.1500 & 0.1558 & 0.1500 & 0.1442 & 0.1481 & 0.1289 & 0.1394 & 0.1394 & 0.1254 & 0.1272 \\ \midrule
\multirow{7}{*}{Ensemble} & \multirow{2}{*}{LLaMA-2-13B} & NQ & 0.5500 & 0.4827 & 0.5173 & 0.4769 & 0.4904 & 0.4913 & 0.4146 & 0.4634 & 0.4094 & 0.4199 \\
& & MS & 0.4750 & 0.5192 & 0.4885 & 0.4577 & 0.4692 & 0.4111 & 0.4686 & 0.4425 & 0.4007 & 0.4077 \\ \cmidrule(lr){2-13}
& \multirow{2}{*}{Vicuna-13B} & NQ & 0.5846 & 0.5135 & 0.5500 & 0.5077 & 0.5212 & 0.5226 & 0.4408 & 0.4930 & 0.4355 & 0.4460 \\
& & MS & 0.5231 & 0.5731 & 0.5404 & 0.5058 & 0.5173 & 0.4547 & 0.5174 & 0.4878 & 0.4425 & 0.4495 \\ \cmidrule(lr){2-13}
& \multirow{2}{*}{GPT-3.5} & NQ & 0.2942 & 0.2596 & 0.2769 & 0.2558 & 0.2615 & 0.2631 & 0.2213 & 0.2474 & 0.2195 & 0.2247 \\
& & MS & 0.2519 & 0.2769 & 0.2615 & 0.2442 & 0.2500 & 0.2195 & 0.2509 & 0.2352 & 0.2143 & 0.2178 \\ \bottomrule
\end{tabular}}
\caption{\small
Results of gray-box attack based on LIAR for RAG systems with different knowledge databases and LLM generators. We consider the two adversarial goals defined in \protect Section~\ref{sec:goal} with example case studies in \protect Appendix~\ref{app:case}. Model settings including ensemble are detailed in \protect Appendix~\ref{app:setting}.}
\label{tab:compare}
\vspace{-3mm}
\end{table*}

\vspace{-1mm}
\section{Experiments}\label{sec:exp}
We conduct a series of experiments to evaluate the effectiveness of LIAR. Detailed \textit{Experiment Settings}, including (1) dataset for attacks, (2) knowledge databases, (3) Retriever models, (4) LLM models, and (5) Training details are included in Appendix~\ref{app:setting}.
\textit{Evaluation Protocol:}
We set the Attack Success Rate (ASR) as the primary metric and evaluate the result by text matching and human judgment akin to~\citet{zou2023universal}.

\subsection{Overall Performance of LIAR}

Table~\ref{tab:compare} summarizes the effectiveness of LIAR for gray-box attacks on various RAG systems, with different source and target models and knowledge bases. We obtain the following key observations:
\noindent\ding{182}~\underline{\textbf{Performance Variability:}}
The effectiveness of gray-box attacks varies significantly across different model pairings. For example, when using LLaMA-2-7B as the source model, attacks on LLaMA-2-13B show relatively higher Harmful Behavior rates, such as 0.3865 for NQ and 0.3596 for MS, compared to Vicuna-13B and GPT-3.5 targets. This suggests that attacks are more effective when source and target models are similar.
\noindent\ding{183}~\underline{\textbf{Knowledge Base Sensitivity:}}
Different knowledge bases exhibit varying levels of vulnerability. The NQ and MS databases consistently show higher Harmful Behavior detection rates, such as 0.3865 and 0.3596 for LLaMA-2-13B under attack by LLaMA-2-7B. In contrast, HQ and FQ databases tend to be less impacted, with lower detection rates, highlighting that the nature of the database content influences attack susceptibility.
\noindent\ding{184}~\underline{\textbf{Ensemble Approach Efficacy:}}
Ensemble attacks, which combine multiple models, generally perform better. For instance, attacks on Vicuna-13B using an ensemble approach show a Harmful Behavior rate of 0.5846 for NQ and 0.5135 for MS. This indicates that using multiple models can enhance the transferability of the generated adversarial content attacks.
\noindent\ding{185}~\underline{\textbf{Behavior Detection Rates:}}
Harmful String detection rates are lower than Harmful Behavior rates across the board. For example, the highest string detection for LLaMA-2-13B under attack by LLaMA-2-7B is 0.3502 for NQ, suggesting that broader content manipulation is more achievable than specific string alterations.
\noindent\ding{187}~\underline{\textbf{General Observations:}}
The results highlight that adversarial contents learned through vulnerabilities can effectively manipulate RAG systems under the gray-box attack scenario. The vulnerabilities is influenced by the choice of models and knowledge bases. More detailed analyses on each components are explored in following subsections.

\begin{figure*}[t]
    \centering
    \begin{subfigure}{0.32\textwidth}
        \centering
        \includegraphics[width=\textwidth]{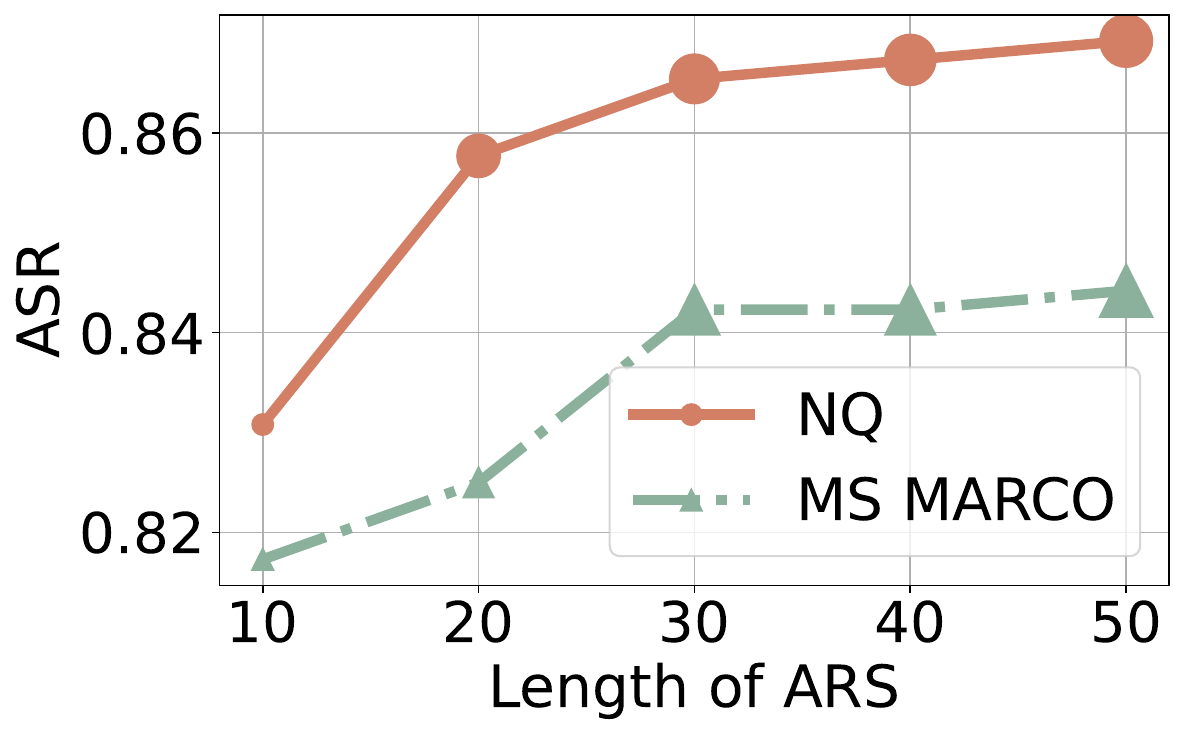}
        \caption{\small ASR \textit{v.s.} Length of ARS.}
        \label{fig:sub1}
    \end{subfigure}
    \hfill
    \begin{subfigure}{0.32\textwidth}
        \centering
        \includegraphics[width=\textwidth]{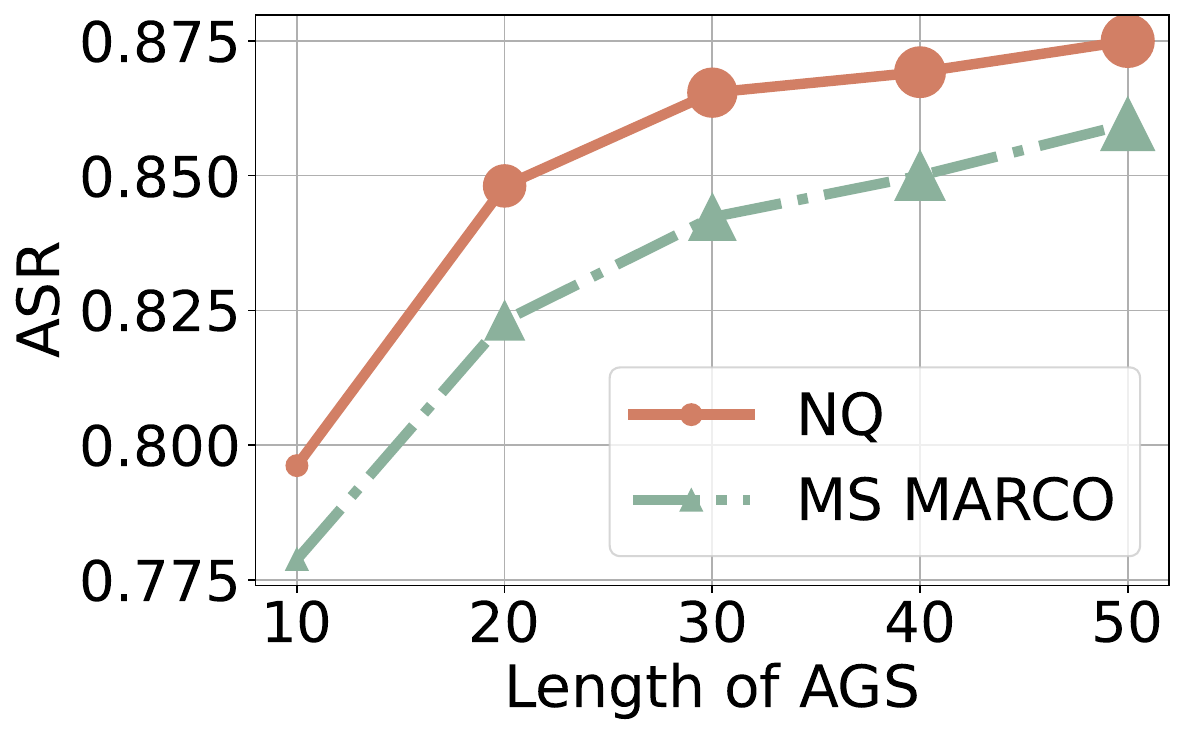}
        \caption{\small ASR \textit{v.s.} Length of AGS.}
        \label{fig:sub2}
    \end{subfigure}
    \hfill
    \begin{subfigure}{0.32\textwidth}
        \centering
        \includegraphics[width=\textwidth]{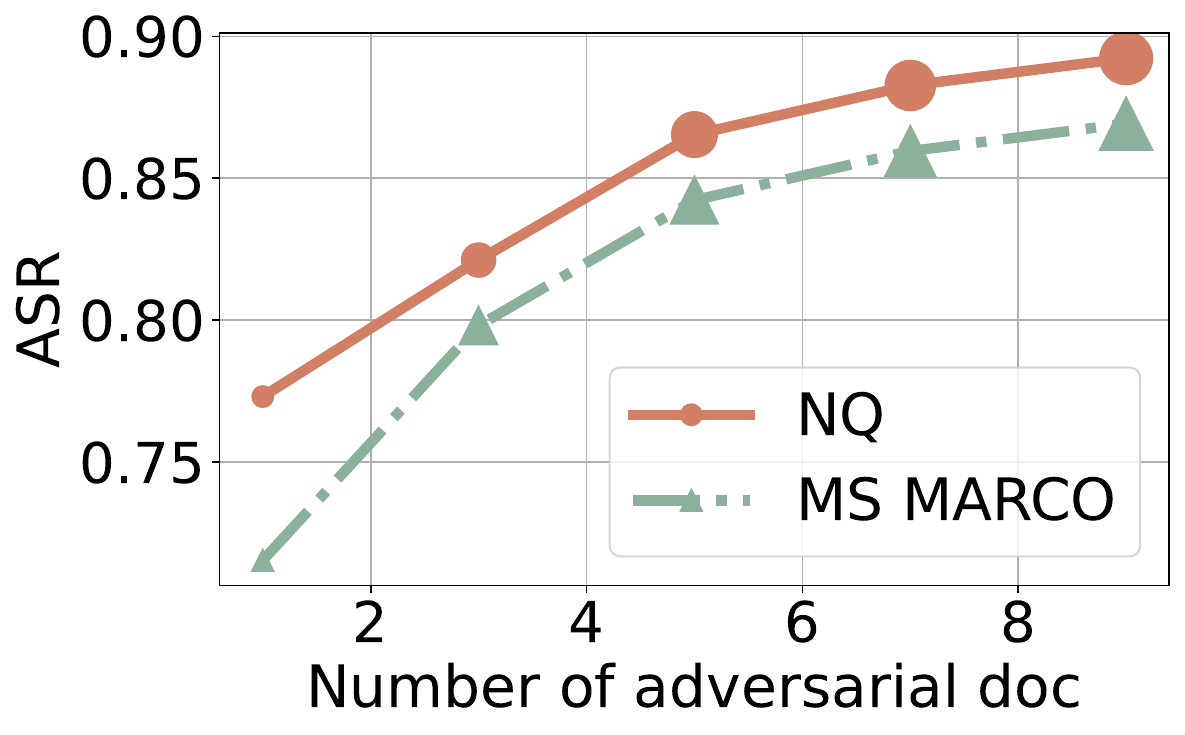}
        \caption{\small ASR \textit{v.s.} Number of Adversarial Doc.}
        \label{fig:sub3}
    \end{subfigure}
    \caption{\small Sensitivity analyses on three key hyper-parameters.}
    \label{fig:sensi}
\end{figure*}

\subsection{Ablation Study}

In the ablation study, we individually investigate the transferability of the two attack components to assess their effectiveness in different scenarios.

\vspace{-2mm}
\paragraph{Transferability to Unseen Knowledge Database.}
We evaluated the performance of our attack on the retriever when applied to RAG with unseen knowledge database. The transferability is measured by the retrieval success rate of adversarial content across various target databases, as shown in Table~\ref{tab:transfer_databases}. The results indicate that the attack maintains a performance with a success rate exceeding 70\% across different databases. Notably, when transferring to HotpotQA, the attack achieved a success rate of 77.12\%, suggesting robust generalization to diverse question types. However, the performance on FiQA and Quora was slightly lower, highlighting some variability in effectiveness depending on the nature of the queries.

\begin{table}[htpb]
\centering
\scalebox{0.8}{
\begin{tabular}{@{}ccc@{}}
\toprule
\textbf{Target Database} & \textbf{NQ} & \textbf{MS MARCO} \\ \midrule
NQ & NA & 0.7269 \\
MS MARCO & 0.7173 & NA \\
HotpotQA & 0.7712 & 0.7519 \\
FiQA & 0.7077 & 0.7000 \\
Quora & 0.7269 & 0.7192 \\ \bottomrule
\end{tabular}}
\caption{\small Transfer results across different databases}
\label{tab:transfer_databases}
\vspace{-6mm}
\end{table}

\paragraph{Transferability to Unseen LLM Generators.}
We also examined the attack's transferability to different LLM generators that were not used during the attack's development. As depicted in Table~\ref{tab:transfer_models}, the attack was particularly effective when transferred to models with similar architectures to those used in training. For instance, Vicuna-13B showed a high success rate of 58.46\% on NQ and 57.31\% on MS MARCO. In contrast, models like Claude-3-Haiku and Gemini-1.0-Pro exhibited significantly lower transferability rates, with success rates dropping below 3\% for Claude-3-Haiku. These results suggest that the effectiveness of the attack may vary considerably with different model architectures.

\begin{table}[htpb]
\centering
\scalebox{0.75}{
\begin{tabular}{@{}ccc@{}}
\toprule
\textbf{Target Model} & \textbf{NQ} & \textbf{MS MARCO} \\ \midrule
LLaMA-2-13B & 0.5500 & 0.5192 \\
Vicuna-13B & 0.5846 & 0.5731 \\
Claude-3-Haiku & 0.0288 & 0.0212 \\
Gemini-1.0-Pro & 0.2635 & 0.2250 \\
GPT-3.5 & 0.2942 & 0.2769 \\
GPT-4 & 0.1673 & 0.1442 \\ \bottomrule
\end{tabular}}
\caption{\small Transfer results across different models}
\label{tab:transfer_models}
\vspace{-1mm}
\end{table}

\vspace{-3mm}
\paragraph{Impact of Different Attack Components.}
Table~\ref{tab:settings_analysis} presents AR, AG, and ASR for various settings. LIAR shows the highest ASR for both NQ (0.7654) and MS MARCO (0.7288), indicating its effectiveness. The absence of a retriever attack significantly reduces AR and ASR, showing the importance of this component. Notably, the removal of the jailbreak prompt results in an ASR of 0.0000 for both datasets, suggesting its vital role in successful attacks.

\begin{table}[]
\centering
\small
\resizebox{\linewidth}{!}{
\begin{tabular}{@{}cccccc@{}}
\toprule
\textbf{Database} & \textbf{Setting} & \textbf{AR} & \textbf{AG} & \textbf{ASR} \\ \midrule
\multirow{4}{*}{NQ} & w/o retriever attack & 0.0412 & 0.9288 & 0.0135 \\
 & w/o jailbreak prompt & 0.9148 & 0.0000 & 0.0000 \\
 & warm-up training & 0.0703 & 0.0462 & 0.0462 \\
 & LIAR & 0.8740 & 0.7654 & 0.7654 \\ \midrule
\multirow{4}{*}{MS MARCO} & w/o retriever attack & 0.0124 & 0.9288 & 0.0038 \\
 & w/o jailbreak prompt & 0.8672 & 0.0000 & 0.0000 \\
 & warm-up training & 0.0539 & 0.0365 & 0.0365 \\
 & LIAR & 0.8247 & 0.7288 & 0.7288 \\ \bottomrule
\end{tabular}}
\caption{\small AR, AG, and ASR for Different Settings}
\label{tab:settings_analysis}
\vspace{-5mm}
\end{table}


\vspace{-1mm}
\subsection{Senstivity of Hyper-parameters}
\label{sec:hyperparameters}

Figure~\ref{fig:sensi} shows the impact of varying three parameters on ASR for NQ and MS MARCO datasets. We use LLaMA-2-7B as the LLM generator. 

\noindent\ding{182}~\underline{{\textbf{Length of ARS}}} (Figure~\ref{fig:sensi}a).
Increasing ARS length from 10 to 50 tokens slightly improves ASR, with NQ seeing a more noticeable increase from 0.82 to 0.86 compared to MS MARCO, which improves from 0.82 to 0.84.
\noindent\ding{183}~\underline{\textbf{Length of AGS}} (Figure~\ref{fig:sensi}b).
Extending AGS from 10 to 50 tokens also enhances ASR. NQ shows an increase from 0.80 to 0.875, while MS MARCO improves from 0.775 to 0.85, indicating a positive but moderate effect.
\noindent\ding{184}~\underline{\textbf{Number of Adversarial Documents}} (Figure~\ref{fig:sensi}c).
Adding more adversarial documents from 2 to 10 leads to a significant rise in ASR, with NQ increasing from 0.75 to 0.90 and MS MARCO from 0.75 to 0.85, suggesting higher content volume can aid attack success.

Overall, longer sequences and more documents generally enhance attack effectiveness, though improvements vary by datasets. We further provide experiment results in Appendix~\ref{app:more_exp}, including the effetiveness of different retriever models, and effectiveness against classic defense. Case studies can be found in Appendix~\ref{app:case}.

\section{Conclusion}
In this paper, we demonstrated the vulnerabilities of Retrieval-Augmented Generative (RAG) models to gray-box attacks. Through a series of experiments, we showed that adversarial content could significantly impact the retrieval and generative components of these systems. Our findings show the need for robust defense mechanisms to protect against such attacks, ensuring the integrity and reliability of RAG models in various applications. In broader terms, we emphasize the urgent need to strengthen trustworthiness of LLM applications.

\section*{Limitation Discussions \& Future Work}
Despite the promising results, our study has several limitations that warrant discussion. 

Firstly, the scope of our experiments was limited to specific datasets and models, which may not fully capture the diversity and complexity of real-world RAG systems. Future work should extend these evaluations to a broader range of datasets, tasks and models, such as math~\cite{wu2024conceptmath,zhang2024mario}, or even multi-modal scenarios~\cite{feng2022beyond}.

Secondly, our gray-box attack assumes partial knowledge of the retriever, which may not always reflect practical attack scenarios where attackers have less information. 

Thirdly, while we demonstrated the effectiveness of our attack in controlled settings, the real-world applicability and impact need further exploration. Real-world systems often involve additional complexities such as continuous updates and dynamic content changes, which were not accounted for in our static evaluation framework. Future work should focus on developing adaptive attack strategies that can cope with these dynamics.

Moreover, our approach primarily targets the text-based RAG systems, and its applicability to multimodal RAG systems, which integrate text with other data forms such as images or audio, remains unexplored. Expanding our methodology to address multimodal contexts will be an important area of future research.

Lastly, our work highlights the need for robust defense mechanisms against adversarial attacks. Future research should aim to develop and evaluate more effective defense strategies, including adversarial training and anomaly detection techniques, to enhance the resilience of RAG models against such threats.

\section*{Ethical Statement}
Our research on attacking RAG models aims to highlight and address potential security vulnerabilities in AI systems. The intention behind this study is to raise awareness about the risks associated with the use of RAG models and to promote the development of more secure and reliable AI technologies.

We acknowledge that the techniques discussed could potentially be misused to cause harm or manipulate information. To mitigate these risks, our work adheres to the principles of responsible disclosure, ensuring that the details provided are sufficient for researchers and practitioners to understand and counteract the vulnerabilities without enabling malicious use. We strongly advocate for the responsible application of AI technologies and emphasize that the findings from this study should be used solely for improving system security.

Additionally, we conducted our experiments in a controlled environment and did not involve real user data or deploy any harmful actions that could affect individuals or organizations. We are committed to ensuring that our research practices align with ethical guidelines and contribute positively to the field of AI security.



\bibliography{ref}

\begin{thebibliography}{59}
\providecommand{\natexlab}[1]{#1}

\bibitem[{Anderson et~al.(2024)Anderson, Amit, and Goldsteen}]{anderson2024my}
Maya Anderson, Guy Amit, and Abigail Goldsteen. 2024.
\newblock Is my data in your retrieval database? membership inference attacks against retrieval augmented generation.
\newblock \emph{arXiv preprint arXiv:2405.20446}.

\bibitem[{Anthropic(2024)}]{anthropic2024claude}
AI~Anthropic. 2024.
\newblock The claude 3 model family: Opus, sonnet, haiku.
\newblock \emph{Claude-3 Model Card}.

\bibitem[{Bai et~al.(2022)Bai, Kadavath, Kundu, Askell, Kernion, Jones, Chen, Goldie, Mirhoseini, McKinnon et~al.}]{bai2022constitutional}
Yuntao Bai, Saurav Kadavath, Sandipan Kundu, Amanda Askell, Jackson Kernion, Andy Jones, Anna Chen, Anna Goldie, Azalia Mirhoseini, Cameron McKinnon, et~al. 2022.
\newblock Constitutional ai: Harmlessness from ai feedback.
\newblock \emph{arXiv preprint arXiv:2212.08073}.

\bibitem[{Bezdek and Hathaway(2003)}]{bezdek2003convergence}
James~C Bezdek and Richard~J Hathaway. 2003.
\newblock Convergence of alternating optimization.
\newblock \emph{Neural, Parallel \& Scientific Computations}, 11(4):351--368.

\bibitem[{Brown et~al.(2020)Brown, Mann, Ryder, Subbiah, Kaplan, Dhariwal, Neelakantan, Shyam, Sastry, Askell, Agarwal, Herbert{-}Voss, Krueger, Henighan, Child, Ramesh, Ziegler, Wu, Winter, Hesse, Chen, Sigler, Litwin, Gray, Chess, Clark, Berner, McCandlish, Radford, Sutskever, and Amodei}]{DBLP:conf/nips/BrownMRSKDNSSAA20}
Tom~B. Brown, Benjamin Mann, Nick Ryder, Melanie Subbiah, Jared Kaplan, Prafulla Dhariwal, Arvind Neelakantan, Pranav Shyam, Girish Sastry, Amanda Askell, Sandhini Agarwal, Ariel Herbert{-}Voss, Gretchen Krueger, Tom Henighan, Rewon Child, Aditya Ramesh, Daniel~M. Ziegler, Jeffrey Wu, Clemens Winter, Christopher Hesse, Mark Chen, Eric Sigler, Mateusz Litwin, Scott Gray, Benjamin Chess, Jack Clark, Christopher Berner, Sam McCandlish, Alec Radford, Ilya Sutskever, and Dario Amodei. 2020.
\newblock Language models are few-shot learners.
\newblock In \emph{NeurIPS}.

\bibitem[{Chen et~al.(2024)Chen, Lin, Han, and Sun}]{chen2024benchmarking}
Jiawei Chen, Hongyu Lin, Xianpei Han, and Le~Sun. 2024.
\newblock Benchmarking large language models in retrieval-augmented generation.
\newblock In \emph{Proceedings of the AAAI Conference on Artificial Intelligence}, volume~38, pages 17754--17762.

\bibitem[{Cheng et~al.(2024)Cheng, Ding, Ju, Wu, Du, Yi, Zhang, and Liu}]{cheng2024trojanrag}
Pengzhou Cheng, Yidong Ding, Tianjie Ju, Zongru Wu, Wei Du, Ping Yi, Zhuosheng Zhang, and Gongshen Liu. 2024.
\newblock Trojanrag: Retrieval-augmented generation can be backdoor driver in large language models.
\newblock \emph{arXiv preprint arXiv:2405.13401}.

\bibitem[{Chiang et~al.(2023)Chiang, Li, Lin, Sheng, Wu, Zhang, Zheng, Zhuang, Zhuang, Gonzalez, Stoica, and Xing}]{vicuna2023}
Wei-Lin Chiang, Zhuohan Li, Zi~Lin, Ying Sheng, Zhanghao Wu, Hao Zhang, Lianmin Zheng, Siyuan Zhuang, Yonghao Zhuang, Joseph~E. Gonzalez, Ion Stoica, and Eric~P. Xing. 2023.
\newblock \href {https://lmsys.org/blog/2023-03-30-vicuna/} {Vicuna: An open-source chatbot impressing gpt-4 with 90\%* chatgpt quality}.

\bibitem[{Cho et~al.(2024)Cho, Jeong, Seo, Hwang, and Park}]{cho2024typos}
Sukmin Cho, Soyeong Jeong, Jeongyeon Seo, Taeho Hwang, and Jong~C Park. 2024.
\newblock Typos that broke the rag's back: Genetic attack on rag pipeline by simulating documents in the wild via low-level perturbations.
\newblock \emph{arXiv preprint arXiv:2404.13948}.

\bibitem[{Dettmers et~al.(2023)Dettmers, Pagnoni, Holtzman, and Zettlemoyer}]{DBLP:conf/nips/DettmersPHZ23}
Tim Dettmers, Artidoro Pagnoni, Ari Holtzman, and Luke Zettlemoyer. 2023.
\newblock Qlora: Efficient finetuning of quantized llms.
\newblock In \emph{NeurIPS}.

\bibitem[{Ebrahimi et~al.(2017)Ebrahimi, Rao, Lowd, and Dou}]{ebrahimi2017hotflip}
Javid Ebrahimi, Anyi Rao, Daniel Lowd, and Dejing Dou. 2017.
\newblock Hotflip: White-box adversarial examples for text classification.
\newblock \emph{arXiv preprint arXiv:1712.06751}.

\bibitem[{Feng et~al.(2022)Feng, Bu, Zhang, and Li}]{feng2022beyond}
Weixin Feng, Xingyuan Bu, Chenchen Zhang, and Xubin Li. 2022.
\newblock Beyond bounding box: Multimodal knowledge learning for object detection.
\newblock \emph{arXiv preprint arXiv:2205.04072}.

\bibitem[{Finn et~al.(2017)Finn, Abbeel, and Levine}]{finn2017model}
Chelsea Finn, Pieter Abbeel, and Sergey Levine. 2017.
\newblock Model-agnostic meta-learning for fast adaptation of deep networks.
\newblock In \emph{International conference on machine learning}, pages 1126--1135. PMLR.

\bibitem[{Gao et~al.(2023)Gao, Xiong, Gao, Jia, Pan, Bi, Dai, Sun, and Wang}]{gao2023retrieval}
Yunfan Gao, Yun Xiong, Xinyu Gao, Kangxiang Jia, Jinliu Pan, Yuxi Bi, Yi~Dai, Jiawei Sun, and Haofen Wang. 2023.
\newblock Retrieval-augmented generation for large language models: A survey.
\newblock \emph{arXiv preprint arXiv:2312.10997}.

\bibitem[{Guo et~al.(2021)Guo, Sablayrolles, J{\'e}gou, and Kiela}]{guo2021gradient}
Chuan Guo, Alexandre Sablayrolles, Herv{\'e} J{\'e}gou, and Douwe Kiela. 2021.
\newblock Gradient-based adversarial attacks against text transformers.
\newblock \emph{arXiv preprint arXiv:2104.13733}.

\bibitem[{Heidi~Steen(2024)}]{bing24}
Dan~Wahlin Heidi~Steen. 2024.
\newblock \href {https://learn.microsoft.com/en-us/azure/search/retrieval-augmented-generation-overview} {Retrieval augmented generation (rag) in azure ai search}.

\bibitem[{Izacard et~al.(2021)Izacard, Caron, Hosseini, Riedel, Bojanowski, Joulin, and Grave}]{izacard2021unsupervised}
Gautier Izacard, Mathilde Caron, Lucas Hosseini, Sebastian Riedel, Piotr Bojanowski, Armand Joulin, and Edouard Grave. 2021.
\newblock Unsupervised dense information retrieval with contrastive learning.
\newblock \emph{arXiv preprint arXiv:2112.09118}.

\bibitem[{Izacard et~al.(2022)Izacard, Caron, Hosseini, Riedel, Bojanowski, Joulin, and Grave}]{DBLP:journals/tmlr/IzacardCHRBJG22}
Gautier Izacard, Mathilde Caron, Lucas Hosseini, Sebastian Riedel, Piotr Bojanowski, Armand Joulin, and Edouard Grave. 2022.
\newblock Unsupervised dense information retrieval with contrastive learning.
\newblock \emph{Trans. Mach. Learn. Res.}, 2022.

\bibitem[{Jang et~al.(2016)Jang, Gu, and Poole}]{jang2016categorical}
Eric Jang, Shixiang Gu, and Ben Poole. 2016.
\newblock Categorical reparameterization with gumbel-softmax.
\newblock \emph{arXiv preprint arXiv:1611.01144}.

\bibitem[{Joo et~al.(2020)Joo, Kim, Shin, and Moon}]{joo2020generalized}
Weonyoung Joo, Dongjun Kim, Seungjae Shin, and Il-Chul Moon. 2020.
\newblock Generalized gumbel-softmax gradient estimator for various discrete random variables.
\newblock \emph{arXiv preprint arXiv:2003.01847}.

\bibitem[{Kaz~Sato(2024)}]{gg24}
Guangsha~Shi Kaz~Sato. 2024.
\newblock \href {https://cloud.google.com/blog/products/ai-machine-learning/rags-powered-by-google-search-technology-part-1} {Your rags powered by google search technology}.

\bibitem[{Khaliq et~al.(2024)Khaliq, Chang, Ma, Pflugfelder, and Mileti{\'c}}]{khaliq2024ragar}
M~Abdul Khaliq, P~Chang, M~Ma, B~Pflugfelder, and F~Mileti{\'c}. 2024.
\newblock Ragar, your falsehood radar: Rag-augmented reasoning for political fact-checking using multimodal large language models.
\newblock \emph{arXiv preprint arXiv:2404.12065}.

\bibitem[{Komeili et~al.(2021)Komeili, Shuster, and Weston}]{komeili2021internet}
Mojtaba Komeili, Kurt Shuster, and Jason Weston. 2021.
\newblock Internet-augmented dialogue generation.
\newblock \emph{arXiv preprint arXiv:2107.07566}.

\bibitem[{Kwiatkowski et~al.(2019)Kwiatkowski, Palomaki, Redfield, Collins, Parikh, Alberti, Epstein, Polosukhin, Devlin, Lee, Toutanova, Jones, Kelcey, Chang, Dai, Uszkoreit, Le, and Petrov}]{DBLP:journals/tacl/KwiatkowskiPRCP19}
Tom Kwiatkowski, Jennimaria Palomaki, Olivia Redfield, Michael Collins, Ankur~P. Parikh, Chris Alberti, Danielle Epstein, Illia Polosukhin, Jacob Devlin, Kenton Lee, Kristina Toutanova, Llion Jones, Matthew Kelcey, Ming{-}Wei Chang, Andrew~M. Dai, Jakob Uszkoreit, Quoc Le, and Slav Petrov. 2019.
\newblock \href {https://doi.org/10.1162/TACL\_A\_00276} {Natural questions: a benchmark for question answering research}.
\newblock \emph{Trans. Assoc. Comput. Linguistics}, 7:452--466.

\bibitem[{Lewis et~al.(2020)Lewis, Perez, Piktus, Petroni, Karpukhin, Goyal, K{\"u}ttler, Lewis, Yih, Rockt{\"a}schel et~al.}]{lewis2020retrieval}
Patrick Lewis, Ethan Perez, Aleksandra Piktus, Fabio Petroni, Vladimir Karpukhin, Naman Goyal, Heinrich K{\"u}ttler, Mike Lewis, Wen-tau Yih, Tim Rockt{\"a}schel, et~al. 2020.
\newblock Retrieval-augmented generation for knowledge-intensive nlp tasks.
\newblock \emph{Advances in Neural Information Processing Systems}, 33:9459--9474.

\bibitem[{Li et~al.(2022)Li, Su, Cai, Wang, and Liu}]{li2022survey}
Huayang Li, Yixuan Su, Deng Cai, Yan Wang, and Lemao Liu. 2022.
\newblock A survey on retrieval-augmented text generation.
\newblock \emph{arXiv preprint arXiv:2202.01110}.

\bibitem[{Li et~al.(2024)Li, Jin, Zhou, Zhang, Zhang, Zhu, and Dou}]{li2024matching}
Xiaoxi Li, Jiajie Jin, Yujia Zhou, Yuyao Zhang, Peitian Zhang, Yutao Zhu, and Zhicheng Dou. 2024.
\newblock From matching to generation: A survey on generative information retrieval.
\newblock \emph{arXiv preprint arXiv:2404.14851}.

\bibitem[{Liu et~al.(2021)Liu, Gao, Zhang, Meng, and Lin}]{liu2021investigating}
Risheng Liu, Jiaxin Gao, Jin Zhang, Deyu Meng, and Zhouchen Lin. 2021.
\newblock Investigating bi-level optimization for learning and vision from a unified perspective: A survey and beyond.
\newblock \emph{IEEE Transactions on Pattern Analysis and Machine Intelligence}, 44(12):10045--10067.

\bibitem[{Liu et~al.(2023{\natexlab{a}})Liu, Xu, Chen, and Xiao}]{liu2023autodan}
Xiaogeng Liu, Nan Xu, Muhao Chen, and Chaowei Xiao. 2023{\natexlab{a}}.
\newblock Autodan: Generating stealthy jailbreak prompts on aligned large language models.
\newblock \emph{arXiv preprint arXiv:2310.04451}.

\bibitem[{Liu et~al.(2023{\natexlab{b}})Liu, Deng, Xu, Li, Zheng, Zhang, Zhao, Zhang, and Liu}]{liu2023jailbreaking}
Yi~Liu, Gelei Deng, Zhengzi Xu, Yuekang Li, Yaowen Zheng, Ying Zhang, Lida Zhao, Tianwei Zhang, and Yang Liu. 2023{\natexlab{b}}.
\newblock Jailbreaking chatgpt via prompt engineering: An empirical study.
\newblock \emph{arXiv preprint arXiv:2305.13860}.

\bibitem[{Liu et~al.(2023{\natexlab{c}})Liu, Zhang, Guo, de~Rijke, Chen, Fan, and Cheng}]{liu2023black}
Yu-An Liu, Ruqing Zhang, Jiafeng Guo, Maarten de~Rijke, Wei Chen, Yixing Fan, and Xueqi Cheng. 2023{\natexlab{c}}.
\newblock Black-box adversarial attacks against dense retrieval models: A multi-view contrastive learning method.
\newblock In \emph{Proceedings of the 32nd ACM International Conference on Information and Knowledge Management}, pages 1647--1656.

\bibitem[{Lyu et~al.(2024)Lyu, Lin, Zheng, Pang, Ling, Jha, and Chen}]{lyu2024task}
Weimin Lyu, Xiao Lin, Songzhu Zheng, Lu~Pang, Haibin Ling, Susmit Jha, and Chao Chen. 2024.
\newblock Task-agnostic detector for insertion-based backdoor attacks.
\newblock \emph{arXiv preprint arXiv:2403.17155}.

\bibitem[{Lyu et~al.(2022)Lyu, Zheng, Ma, and Chen}]{lyu2022study}
Weimin Lyu, Songzhu Zheng, Tengfei Ma, and Chao Chen. 2022.
\newblock A study of the attention abnormality in trojaned berts.
\newblock In \emph{Proceedings of the 2022 Conference of the North American Chapter of the Association for Computational Linguistics: Human Language Technologies}, pages 4727--4741.

\bibitem[{Lyu et~al.(2023)Lyu, Zheng, Pang, Ling, and Chen}]{lyu2023attention}
Weimin Lyu, Songzhu Zheng, Lu~Pang, Haibin Ling, and Chao Chen. 2023.
\newblock Attention-enhancing backdoor attacks against bert-based models.
\newblock In \emph{Findings of the Association for Computational Linguistics: EMNLP 2023}, pages 10672--10690.

\bibitem[{Maia et~al.(2018)Maia, Handschuh, Freitas, Davis, McDermott, Zarrouk, and Balahur}]{DBLP:conf/www/MaiaHFDMZB18}
Macedo Maia, Siegfried Handschuh, Andr{\'{e}} Freitas, Brian Davis, Ross McDermott, Manel Zarrouk, and Alexandra Balahur. 2018.
\newblock Www'18 open challenge: Financial opinion mining and question answering.
\newblock In \emph{{WWW} (Companion Volume)}, pages 1941--1942. {ACM}.

\bibitem[{Nguyen et~al.(2016)Nguyen, Rosenberg, Song, Gao, Tiwary, Majumder, and Deng}]{DBLP:conf/nips/NguyenRSGTMD16}
Tri Nguyen, Mir Rosenberg, Xia Song, Jianfeng Gao, Saurabh Tiwary, Rangan Majumder, and Li~Deng. 2016.
\newblock {MS} {MARCO:} {A} human generated machine reading comprehension dataset.
\newblock In \emph{CoCo@NIPS}, volume 1773 of \emph{{CEUR} Workshop Proceedings}. CEUR-WS.org.

\bibitem[{OpenAI(2023)}]{DBLP:journals/corr/abs-2303-08774}
OpenAI. 2023.
\newblock {GPT-4} technical report.
\newblock \emph{CoRR}, abs/2303.08774.

\bibitem[{Qi et~al.(2024)Qi, Huang, Panda, Henderson, Wang, and Mittal}]{qi2024visual}
Xiangyu Qi, Kaixuan Huang, Ashwinee Panda, Peter Henderson, Mengdi Wang, and Prateek Mittal. 2024.
\newblock Visual adversarial examples jailbreak aligned large language models.
\newblock In \emph{Proceedings of the AAAI Conference on Artificial Intelligence}, volume~38, pages 21527--21536.

\bibitem[{Raval and Verma(2020)}]{raval2020one}
Nisarg Raval and Manisha Verma. 2020.
\newblock One word at a time: adversarial attacks on retrieval models.
\newblock \emph{arXiv preprint arXiv:2008.02197}.

\bibitem[{Song et~al.(2020)Song, Rush, and Shmatikov}]{song2020adversarial}
Congzheng Song, Alexander~M Rush, and Vitaly Shmatikov. 2020.
\newblock Adversarial semantic collisions.
\newblock \emph{arXiv preprint arXiv:2011.04743}.

\bibitem[{Song et~al.(2022)Song, Zhang, Zhu, Tang, and Yang}]{song2022trattack}
Junshuai Song, Jiangshan Zhang, Jifeng Zhu, Mengyun Tang, and Yong Yang. 2022.
\newblock Trattack: Text rewriting attack against text retrieval.
\newblock In \emph{Proceedings of the 7th Workshop on Representation Learning for NLP}, pages 191--203.

\bibitem[{Tan et~al.(2024)Tan, Zhao, Moraffah, Li, Kong, Chen, and Liu}]{tan2024wolf}
Zhen Tan, Chengshuai Zhao, Raha Moraffah, Yifan Li, Yu~Kong, Tianlong Chen, and Huan Liu. 2024.
\newblock The wolf within: Covert injection of malice into mllm societies via an mllm operative.
\newblock \emph{arXiv preprint arXiv:2402.14859}.

\bibitem[{Team et~al.(2023)Team, Anil, Borgeaud, Wu, Alayrac, Yu, Soricut, Schalkwyk, Dai, Hauth et~al.}]{team2023gemini}
Gemini Team, Rohan Anil, Sebastian Borgeaud, Yonghui Wu, Jean-Baptiste Alayrac, Jiahui Yu, Radu Soricut, Johan Schalkwyk, Andrew~M Dai, Anja Hauth, et~al. 2023.
\newblock Gemini: a family of highly capable multimodal models.
\newblock \emph{arXiv preprint arXiv:2312.11805}.

\bibitem[{Thakur et~al.(2021{\natexlab{a}})Thakur, Reimers, R{\"{u}}ckl{\'{e}}, Srivastava, and Gurevych}]{DBLP:conf/nips/Thakur0RSG21}
Nandan Thakur, Nils Reimers, Andreas R{\"{u}}ckl{\'{e}}, Abhishek Srivastava, and Iryna Gurevych. 2021{\natexlab{a}}.
\newblock \href {https://datasets-benchmarks-proceedings.neurips.cc/paper/2021/hash/65b9eea6e1cc6bb9f0cd2a47751a186f-Abstract-round2.html} {{BEIR:} {A} heterogeneous benchmark for zero-shot evaluation of information retrieval models}.
\newblock In \emph{Proceedings of the Neural Information Processing Systems Track on Datasets and Benchmarks 1, NeurIPS Datasets and Benchmarks 2021, December 2021, virtual}.

\bibitem[{Thakur et~al.(2021{\natexlab{b}})Thakur, Reimers, R{\"u}ckl{\'e}, Srivastava, and Gurevych}]{thakur2021beir}
Nandan Thakur, Nils Reimers, Andreas R{\"u}ckl{\'e}, Abhishek Srivastava, and Iryna Gurevych. 2021{\natexlab{b}}.
\newblock Beir: A heterogenous benchmark for zero-shot evaluation of information retrieval models.
\newblock \emph{arXiv preprint arXiv:2104.08663}.

\bibitem[{Touvron et~al.(2023)Touvron, Martin, Stone, Albert, Almahairi, Babaei, Bashlykov, Batra, Bhargava, Bhosale, Bikel, Blecher, Canton{-}Ferrer, Chen, Cucurull, Esiobu, Fernandes, Fu, Fu, Fuller, Gao, Goswami, Goyal, Hartshorn, Hosseini, Hou, Inan, Kardas, Kerkez, Khabsa, Kloumann, Korenev, Koura, Lachaux, Lavril, Lee, Liskovich, Lu, Mao, Martinet, Mihaylov, Mishra, Molybog, Nie, Poulton, Reizenstein, Rungta, Saladi, Schelten, Silva, Smith, Subramanian, Tan, Tang, Taylor, Williams, Kuan, Xu, Yan, Zarov, Zhang, Fan, Kambadur, Narang, Rodriguez, Stojnic, Edunov, and Scialom}]{DBLP:journals/corr/abs-2307-09288}
Hugo Touvron, Louis Martin, Kevin Stone, Peter Albert, Amjad Almahairi, Yasmine Babaei, Nikolay Bashlykov, Soumya Batra, Prajjwal Bhargava, Shruti Bhosale, Dan Bikel, Lukas Blecher, Cristian Canton{-}Ferrer, Moya Chen, Guillem Cucurull, David Esiobu, Jude Fernandes, Jeremy Fu, Wenyin Fu, Brian Fuller, Cynthia Gao, Vedanuj Goswami, Naman Goyal, Anthony Hartshorn, Saghar Hosseini, Rui Hou, Hakan Inan, Marcin Kardas, Viktor Kerkez, Madian Khabsa, Isabel Kloumann, Artem Korenev, Punit~Singh Koura, Marie{-}Anne Lachaux, Thibaut Lavril, Jenya Lee, Diana Liskovich, Yinghai Lu, Yuning Mao, Xavier Martinet, Todor Mihaylov, Pushkar Mishra, Igor Molybog, Yixin Nie, Andrew Poulton, Jeremy Reizenstein, Rashi Rungta, Kalyan Saladi, Alan Schelten, Ruan Silva, Eric~Michael Smith, Ranjan Subramanian, Xiaoqing~Ellen Tan, Binh Tang, Ross Taylor, Adina Williams, Jian~Xiang Kuan, Puxin Xu, Zheng Yan, Iliyan Zarov, Yuchen Zhang, Angela Fan, Melanie Kambadur, Sharan Narang, Aur{\'{e}}lien Rodriguez, Robert Stojnic, Sergey Edunov,
  and Thomas Scialom. 2023.
\newblock Llama 2: Open foundation and fine-tuned chat models.
\newblock \emph{CoRR}, abs/2307.09288.

\bibitem[{Wang et~al.(2024)Wang, Khramtsova, Zhuang, and Zuccon}]{wang2024feb4rag}
Shuai Wang, Ekaterina Khramtsova, Shengyao Zhuang, and Guido Zuccon. 2024.
\newblock Feb4rag: Evaluating federated search in the context of retrieval augmented generation.
\newblock \emph{arXiv preprint arXiv:2402.11891}.

\bibitem[{Wei et~al.(2023)Wei, Haghtalab, and Steinhardt}]{NEURIPS2023_fd661313}
Alexander Wei, Nika Haghtalab, and Jacob Steinhardt. 2023.
\newblock \href {https://proceedings.neurips.cc/paper_files/paper/2023/file/fd6613131889a4b656206c50a8bd7790-Paper-Conference.pdf} {Jailbroken: How does llm safety training fail?}
\newblock In \emph{Advances in Neural Information Processing Systems}, volume~36, pages 80079--80110. Curran Associates, Inc.

\bibitem[{Wei et~al.(2024)Wei, Yang, Song, Lu, Hu, Tran, Peng, Liu, Huang, Du et~al.}]{wei2024long}
Jerry Wei, Chengrun Yang, Xinying Song, Yifeng Lu, Nathan Hu, Dustin Tran, Daiyi Peng, Ruibo Liu, Da~Huang, Cosmo Du, et~al. 2024.
\newblock Long-form factuality in large language models.
\newblock \emph{arXiv preprint arXiv:2403.18802}.

\bibitem[{Wu et~al.(2024)Wu, Liu, Bu, Liu, Zhou, Zhang, Zhang, Bai, Chen, Ge et~al.}]{wu2024conceptmath}
Yanan Wu, Jie Liu, Xingyuan Bu, Jiaheng Liu, Zhanhui Zhou, Yuanxing Zhang, Chenchen Zhang, Zhiqi Bai, Haibin Chen, Tiezheng Ge, et~al. 2024.
\newblock Conceptmath: A bilingual concept-wise benchmark for measuring mathematical reasoning of large language models.
\newblock \emph{arXiv preprint arXiv:2402.14660}.

\bibitem[{Xiong et~al.(2021)Xiong, Xiong, Li, Tang, Liu, Bennett, Ahmed, and Overwijk}]{DBLP:conf/iclr/XiongXLTLBAO21}
Lee Xiong, Chenyan Xiong, Ye~Li, Kwok{-}Fung Tang, Jialin Liu, Paul~N. Bennett, Junaid Ahmed, and Arnold Overwijk. 2021.
\newblock Approximate nearest neighbor negative contrastive learning for dense text retrieval.
\newblock In \emph{{ICLR}}. OpenReview.net.

\bibitem[{Xue et~al.(2024)Xue, Zheng, Hu, Liu, Chen, and Lou}]{xue2024badrag}
Jiaqi Xue, Mengxin Zheng, Yebowen Hu, Fei Liu, Xun Chen, and Qian Lou. 2024.
\newblock Badrag: Identifying vulnerabilities in retrieval augmented generation of large language models.
\newblock \emph{arXiv preprint arXiv:2406.00083}.

\bibitem[{Yang et~al.(2018)Yang, Qi, Zhang, Bengio, Cohen, Salakhutdinov, and Manning}]{DBLP:conf/emnlp/Yang0ZBCSM18}
Zhilin Yang, Peng Qi, Saizheng Zhang, Yoshua Bengio, William~W. Cohen, Ruslan Salakhutdinov, and Christopher~D. Manning. 2018.
\newblock Hotpotqa: {A} dataset for diverse, explainable multi-hop question answering.
\newblock In \emph{{EMNLP}}, pages 2369--2380. Association for Computational Linguistics.

\bibitem[{Zeng et~al.(2024)Zeng, Lin, Zhang, Yang, Jia, and Shi}]{zeng2024johnny}
Yi~Zeng, Hongpeng Lin, Jingwen Zhang, Diyi Yang, Ruoxi Jia, and Weiyan Shi. 2024.
\newblock How johnny can persuade llms to jailbreak them: Rethinking persuasion to challenge ai safety by humanizing llms.
\newblock \emph{arXiv preprint arXiv:2401.06373}.

\bibitem[{Zhang et~al.(2024{\natexlab{a}})Zhang, Li, and Fan}]{zhang2024mario}
Boning Zhang, Chengxi Li, and Kai Fan. 2024{\natexlab{a}}.
\newblock Mario eval: Evaluate your math llm with your math llm--a mathematical dataset evaluation toolkit.
\newblock \emph{arXiv preprint arXiv:2404.13925}.

\bibitem[{Zhang et~al.(2024{\natexlab{b}})Zhang, Khanduri, Tsaknakis, Yao, Hong, and Liu}]{zhang2024introduction}
Yihua Zhang, Prashant Khanduri, Ioannis Tsaknakis, Yuguang Yao, Mingyi Hong, and Sijia Liu. 2024{\natexlab{b}}.
\newblock An introduction to bilevel optimization: Foundations and applications in signal processing and machine learning.
\newblock \emph{IEEE Signal Processing Magazine}, 41(1):38--59.

\bibitem[{Zhong et~al.(2023)Zhong, Huang, Wettig, and Chen}]{zhong2023poisoning}
Zexuan Zhong, Ziqing Huang, Alexander Wettig, and Danqi Chen. 2023.
\newblock Poisoning retrieval corpora by injecting adversarial passages.
\newblock \emph{arXiv preprint arXiv:2310.19156}.

\bibitem[{Zou et~al.(2023)Zou, Wang, Kolter, and Fredrikson}]{zou2023universal}
Andy Zou, Zifan Wang, J~Zico Kolter, and Matt Fredrikson. 2023.
\newblock Universal and transferable adversarial attacks on aligned language models.
\newblock \emph{arXiv preprint arXiv:2307.15043}.

\bibitem[{Zou et~al.(2024)Zou, Geng, Wang, and Jia}]{zou2024poisonedrag}
Wei Zou, Runpeng Geng, Binghui Wang, and Jinyuan Jia. 2024.
\newblock Poisonedrag: Knowledge poisoning attacks to retrieval-augmented generation of large language models.
\newblock \emph{arXiv preprint arXiv:2402.07867}.

\end{thebibliography}


\appendix


\section{Detailed Experiment Setups}\label{app:setting}

\subsection{Warmup Experiment}\label{app:warmup_settiing}
In this experiment, we use a BERT-based state-of-the-art dense retrieval model, Contriever~\cite{izacard2021unsupervised}, for the retrieval process and a {LLaMA-2-7B-Chat} model for the generative component. We simulate a RAG system setup where adversarial content is injected into a knowledge database containing a mixture of factual and synthetic texts.

\subsection{Settings for Major Experiments }

\paragraph{Dataset.} We utilize AdvBench~\cite{zou2023universal} as a benchmark in our evaluation, including two dataset: \ding{182}~Harmful Behavior: a collection of 520 harmful behaviors formed as instructions ranged over profanity, graphic depictions, threatening behavior, misinformation, discrimination, cybercrime, and dangerous or illegal suggestions. \ding{183}~Harmful String: it contains 574 strings sharing the same theme as Harmful Behavior.
\paragraph{Knowledge Base.}
We involve five knowledge bases derived from BEIR benchmark~\cite{DBLP:conf/nips/Thakur0RSG21}: Natrual Questions (NQ)~\cite{DBLP:journals/tacl/KwiatkowskiPRCP19}, MS MARCO (MS)~\cite{DBLP:conf/nips/NguyenRSGTMD16}, HotpotQA (HQ)~\cite{DBLP:conf/emnlp/Yang0ZBCSM18}, FiQA (FQ)~\cite{DBLP:conf/www/MaiaHFDMZB18}, and Quora (QR).
\vspace{-1mm}
\paragraph{Retriever.}
We include Contriever~\cite{DBLP:journals/tmlr/IzacardCHRBJG22}, Contriever-ms~\cite{DBLP:journals/tmlr/IzacardCHRBJG22}, and ANCE~\cite{DBLP:conf/iclr/XiongXLTLBAO21} in our experiment with dot product similarity as a retrieval criterion. The default retrieval number is $5$.
\vspace{-1mm}
\paragraph{LLM Selection.}
We consider LLaMA-2-7B/13B-Chat~\cite{DBLP:journals/corr/abs-2307-09288}, LLaMA-3-8B-Instruct, Vicuna-7B~\cite{vicuna2023}, Guanaco-7B~\cite{DBLP:conf/nips/DettmersPHZ23}, GPT-3.5-turbo-0125~\cite{DBLP:conf/nips/BrownMRSKDNSSAA20}, GPT-4-turbo-2024-04-09	~\cite{DBLP:journals/corr/abs-2303-08774},  Gemini-1.0-pro~\cite{team2023gemini}, and Claude-3-Haiku~\cite{anthropic2024claude}. Specially, for model ensemble defined in Eq~\eqref{eq:adv_g}, we use Vicuna-7B and Guanaco-7B since they shar the same vocabulary.

\vspace{-2mm}
\paragraph{Training Detail.}
Unless otherwise mentioned, we train 5 adversarial documents with a length of 30 injected into the knowledge database and use Conretrieve~\cite{DBLP:journals/tmlr/IzacardCHRBJG22} as default retriever. In the hotFlip method~\cite{ebrahimi2017hotflip}, we consider top-100 tokens as potential replacements. AGS length is fixed as 30, which is effective but less time-consuming. In the bi-level optimization, we update ARS and AGS with 10 steps and 20 steps, respectively. Detailed key parameter analyses can be found in Section~\ref{sec:hyperparameters} and Appendix~\ref{app:more_exp}.

\paragraph{Evaluation Merics:} We primarily employ \emph{Attack Success Rate}~(ASR) to assess the effectiveness of the propose attack strategy, where higher ASR is more desired. {ASR} is formally defined below:
\begin{equation*}
    ASR = \frac{\#\text{ of unsafe responses} }{\# \text{ of user queries to RAG}}.
\end{equation*}
\section{More Experiments}\label{app:more_exp}

\subsection{Effect of Different Retriever Models}\label{app:retriever}

Figure~\ref{fig:retriever} shows the Adversarial Success Rate (ASR) for different retriever models on NQ and MS MARCO datasets.

\noindent\textbf{Contriever}: Exhibits the highest ASR (>0.8 for NQ and ~0.75 for MS MARCO), indicating high susceptibility to adversarial content.

\noindent\textbf{Contriever-ms}: Moderate ASR (~0.5 for NQ, ~0.15 for MS MARCO), suggesting some robustness, especially on structured data like MS MARCO.

\noindent\textbf{ANCE}: Lowest ASR (~0.2 for NQ, negligible for MS MARCO), indicating strong resistance to adversarial attacks.
Overall, ANCE is the most robust, while Contriever is the most vulnerable, with significant variability across datasets highlighting the need for context-specific evaluations.

\begin{figure}[htpb]
    \centering
    \includegraphics[width=\linewidth]{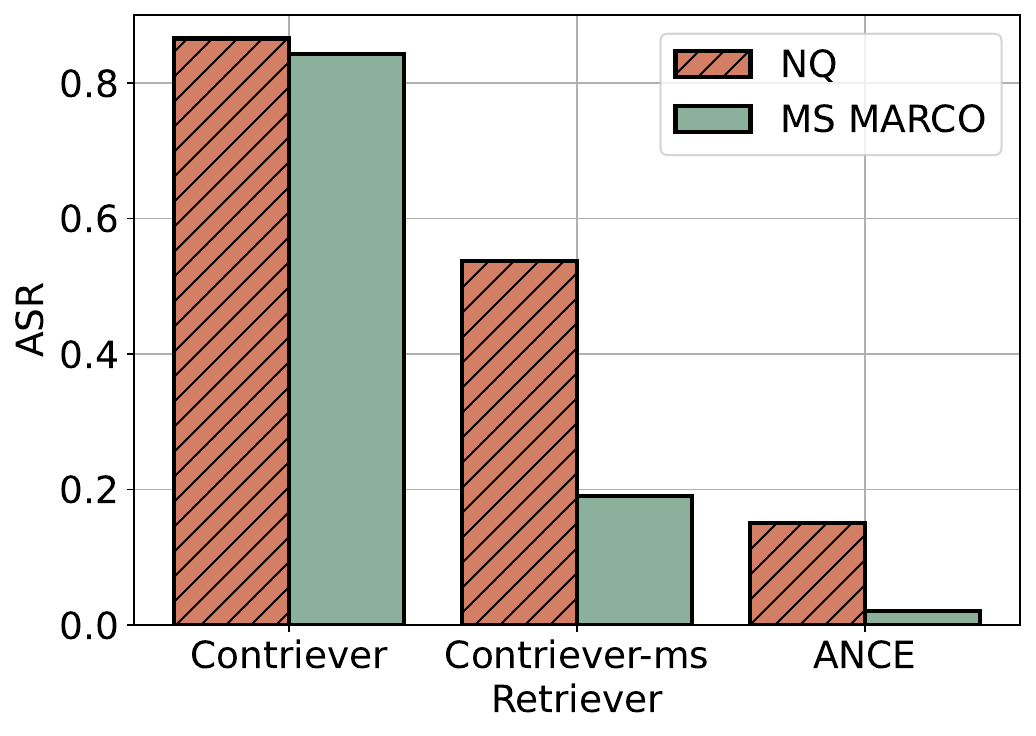}
    \caption{ASR \textit{v.s.}Different Retriever Models.}
    \label{fig:retriever}
\end{figure}

\subsection{Analysis of Attack Effectiveness Against Defense Methods}

Table~\ref{tab:defense} presents the Adversarial Success Rate (ASR) of the proposed attack against various classic defense methods across NQ and MS MARCO datasets. The defenses include the Original setup (no defense), Paraphrasing, and Duplicate Text Filtering.

\noindent\textit{Original Defense.}
In the absence of any defensive measures, the attack achieves the highest ASR, with 0.8654 for NQ and 0.8423 for MS MARCO. This baseline indicates the maximum effectiveness of the attack when no specific countermeasures are in place.

\noindent\textit{Paraphrasing Defense.}
Implementing paraphrasing as a defense reduces the ASR to 0.8308 for NQ and 0.8212 for MS MARCO. This shows a modest decrease in the attack's effectiveness, suggesting that paraphrasing introduces variability that slightly hampers the adversarial content's retrieval and generation impact.

\noindent\textit{Duplicate Text Filtering Defense.}
Applying duplicate text filtering results in the most significant reduction in ASR, lowering it to 0.7596 for NQ and 0.7346 for MS MARCO. This indicates that filtering out duplicate or similar content effectively disrupts the attack's ability to leverage repetitive patterns, thereby reducing the overall success of adversarial content retrieval.

\noindent\textit{Summary.}
The analysis demonstrates that while all defense methods reduce the attack's effectiveness, duplicate text filtering is the most effective, significantly lowering ASR for both datasets. Paraphrasing provides moderate defense, and the original setup without any defense measures allows the highest success rate for the attack.

\begin{table}[htpb]
\centering
\begin{tabular}{@{}ccc@{}}
\toprule
\textbf{Defense Method} & \textbf{NQ} & \textbf{MS MARCO} \\ \midrule
Original & 0.8654 & 0.8423 \\
Paraphrasing & 0.8308 & 0.8212 \\
Duplicate Text Filtering & 0.7596 & 0.7346 \\ \bottomrule
\end{tabular}
\caption{\small Effectiveness of the proposed attack against different defense methods.}
\label{tab:defense}
\end{table}

\section{Acknowledgment of AI Assistance in Writing and Revision}
We utilized ChatGPT-4 for revising and enhancing sections of this paper.

\begin{figure}[htpb]
\vspace{-2mm}
    \centering
    \includegraphics[width=\linewidth]{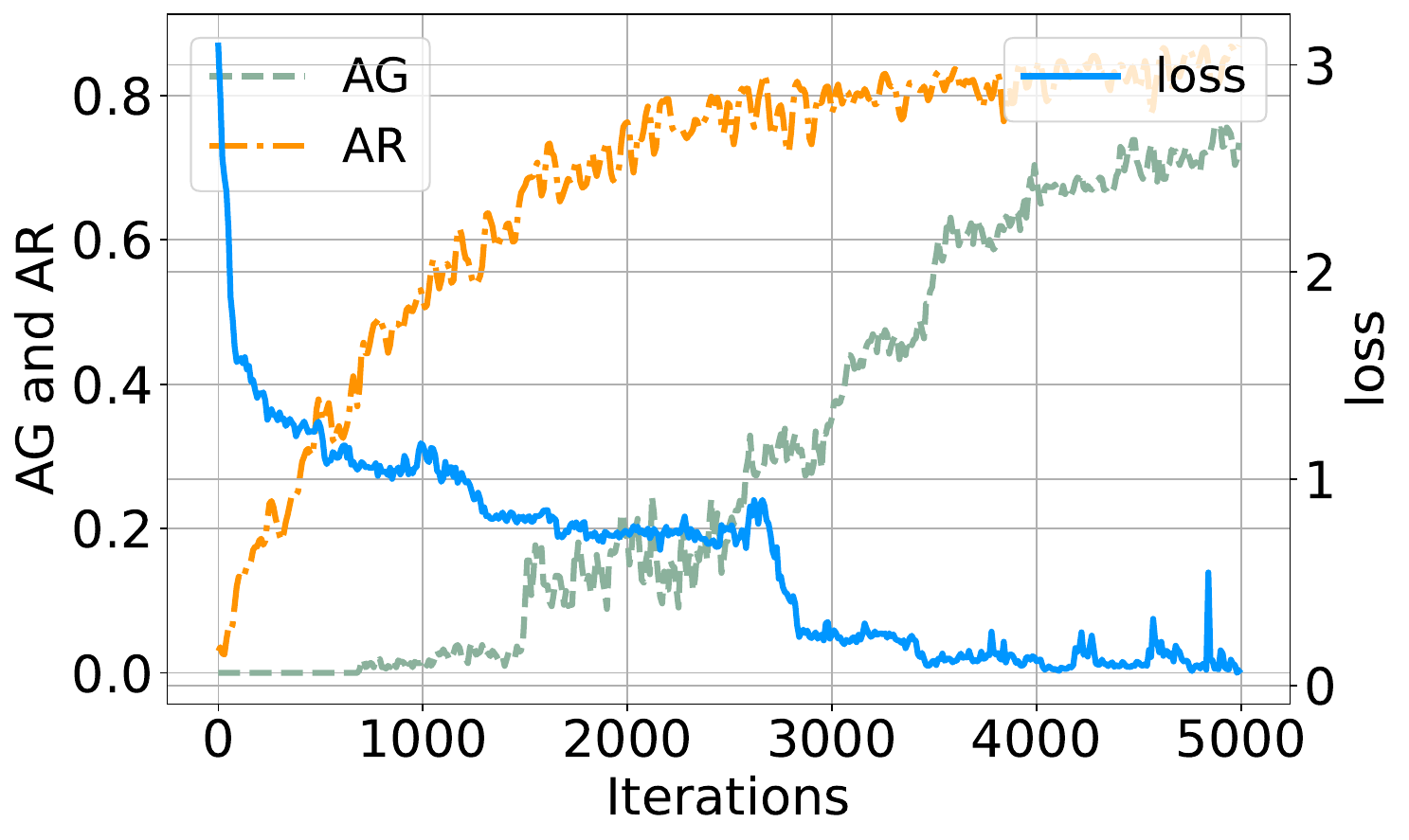}
    \caption{\small Visualization of adversar retrieval rate AR, adversar goal achievement rate AG, and training loss across training iteration of LIAR.}
    \label{fig:blo_loss_ag_ar}
\end{figure}

\section{Convergence of LIAR}\label{app:theory}

\subsection{Empirical Evidence}
Figure~\ref{fig:blo_loss_ag_ar} shows the convergence of LIAR across 5000 iterations, tracking Adversarial Retrieval rate (AR), Adversarial Goal achievement rate (AG), and training loss. AR rapidly increases, stabilizing at 0.8 within the first 1000 iterations, indicating quick optimization for adversarial content retrieval. AG rises more gradually, reaching 0.6, reflecting the complexity of influencing output. Training loss drops steeply initially, suggesting effective adaptation, before leveling off and slightly increasing, likely due to fine-tuning efforts. Overall, compared to vanilla \hyperlink{AT}{AT}, LIAR achieves smoother convergence with higher early success in retrieval and gradual, steady improvement in goal achievement.

\newpage
\onecolumn
\subsection{Theoretical Proof}
To prove the tractability of the convergence of the BLO in LIAR (Eq.~\ref{eq:trades}), we need to prove that the lower level of the BLO is convex, i.e., the function $\mathbf{R}_\text{adv}(\mathbf{G}_\text{adv})$. Based on the analysis in~\cite{zhang2024introduction}, if the lower level is convex, the entire BLO is thereby convergent. As such, hereby we propose the following theorem and provide the detailed proof subsequently:
\begin{theorem}
    The target function $\mathbf{R}_\text{adv}(\mathbf{G}_\text{adv})$ could be represented as follows:
\begin{equation}
     \mathbf{R}_\text{adv}(h_D(\mathbf{D}_\text{adv})) = \frac{1}{|\mathcal{K}|} \sum_{D_i \in \mathcal{K}} h_Q(D_i)^\top h_D(\mathbf{D}_\text{adv}),
\end{equation}
where \( h(\cdot) \) is a function that transforms an input text into an embedding. 
If we consider \( h(\mathbf{D}_\text{adv}) \) as the variable, the target function $\mathbf{R}_\text{adv}(\mathbf{G}_\text{adv})$ is convex. 
\end{theorem}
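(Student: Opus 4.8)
The plan is to recognize that, under the reparametrization suggested by the theorem statement, the lower-level objective collapses to a linear function, for which convexity is immediate. First I would fix the optimization variable to be the document embedding $\mathbf{z} := h_D(\mathbf{D}_\text{adv})$ rather than the discrete token sequence; this is precisely the change of perspective the theorem invites by writing the objective as a function of $h_D(\mathbf{D}_\text{adv})$. With this choice, every $h_Q(D_i)$ for $D_i \in \mathcal{K}$ is a \emph{fixed} vector independent of $\mathbf{z}$, since the query encoder is applied only to the frozen knowledge-base documents and never to the adversarial content.

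Next I would collect these fixed vectors into a single constant $\mathbf{c} := \frac{1}{|\mathcal{K}|}\sum_{D_i\in\mathcal{K}} h_Q(D_i)$, so that the objective rewrites as
\begin{equation}
f(\mathbf{z}) = \frac{1}{|\mathcal{K}|}\sum_{D_i\in\mathcal{K}} h_Q(D_i)^\top \mathbf{z} = \mathbf{c}^\top \mathbf{z},
\end{equation}
which is an affine (indeed purely linear) function of $\mathbf{z}$. To close the argument formally I would compute the Hessian $\nabla^2_{\mathbf{z}} f(\mathbf{z}) = \mathbf{0}$ and observe that the zero matrix is positive semidefinite, i.e.\ the standard second-order characterization of convexity is met; equivalently, I could invoke the identity $f(\lambda \mathbf{z}_1 + (1-\lambda)\mathbf{z}_2) = \lambda f(\mathbf{z}_1) + (1-\lambda) f(\mathbf{z}_2)$, which satisfies the convexity inequality with equality. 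Either route certifies convexity, and then by the result of~\citet{zhang2024introduction} cited above this establishes tractable convergence of the bi-level optimization in Eq.~\eqref{eq:trades}.

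I do not expect a genuine computational obstacle here, since the algebra is a one-line reduction; the only point requiring care is a \emph{modeling} justification rather than an analytic one. The true lower-level search in Eq.~\eqref{eq:adv_r} is over discrete tokens via HotFlip, so I would be explicit that the convexity claim concerns the continuous relaxation in embedding space, which is exactly the object on which the BLO convergence theory operates. I would further remark that linearity makes $f$ simultaneously concave, so the $\arg\max$ defining $\mathbf{R}_\text{adv}^*(\mathbf{G}_\text{adv})$ is well-posed on a suitably constrained (e.g.\ norm-bounded) embedding domain, and that $\mathbf{c}$ being independent of $\mathbf{G}_\text{adv}$ is what keeps the inner geometry stable across the alternating updates of Algorithm~\ref{alg:art}.
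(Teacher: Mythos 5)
Your proposal is correct and follows essentially the same route as the paper's proof: both reduce the objective to a linear function of the embedding $h_D(\mathbf{D}_\text{adv})$ and conclude convexity from linearity (the paper verifies the convexity inequality holds with equality, while you equivalently invoke the zero Hessian / affine characterization). Your added caveat that the claim concerns the continuous relaxation in embedding space rather than the discrete HotFlip token search is a worthwhile clarification that the paper leaves implicit.
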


\begin{proof}

According to the definition of convexity,  the given function \( \mathbf{R}_\text{adv}: \mathbb{R}^n \to \mathbb{R} \) is convex if for all \( x_1, x_2 \in \mathbb{R}^n \) and \( \theta \in [0, 1] \), the following condition holds:
\[ \mathbf{R}_\text{adv}(\theta x_1 + (1 - \theta) x_2) \leq \theta \mathbf{R}_\text{adv}(x_1) + (1 - \theta) \mathbf{R}_\text{adv}(x_2). \]
Based on the definition, hereby we start to prove that $\mathbf{R}_\text{adv}$ satisfies the condition.
We first compute the value of \( \mathbf{R}_\text{adv}(\theta h_D(\mathbf{D}_{\text{adv}_1}) + (1 - \theta) h_D(\mathbf{D}_{\text{adv}_2})) \) as follows:
   \[
   \mathbf{R}_\text{adv}(\theta h_D(\mathbf{D}_{\text{adv}_1}) + (1 - \theta) h_D(\mathbf{D}_{\text{adv}_2})) = \frac{1}{|\mathcal{K}|} \sum_{D_i \in \mathcal{K}} h_Q(D_i)^\top \left(\theta h_D(\mathbf{D}_{\text{adv}_1}) + (1 - \theta) h_D(\mathbf{D}_{\text{adv}_2})\right).
   \]
Then we distribute the dot product:
   \[
   \begin{aligned}
          \frac{1}{|\mathcal{K}|} \sum_{D_i \in \mathcal{K}} h_Q(D_i)^\top & (\theta h_D(\mathbf{D}_{\text{adv}_1}) + (1 - \theta) h_D(\mathbf{D}_{\text{adv}_2})) \\
          &= \theta (\frac{1}{|\mathcal{K}|} \sum_{D_i \in \mathcal{K}} h_Q(D_i)^\top h_D(\mathbf{D}_{\text{adv}_1})) + (1 - \theta) (\frac{1}{|\mathcal{K}|} \sum_{D_i \in \mathcal{K}} h_Q(D_i)^\top h_D(\mathbf{D}_{\text{adv}_2})).
   \end{aligned}
   \]
Notice that 
   \[ \mathbf{R}_\text{adv}(h_D(\mathbf{D}_{\text{adv}_1})) = \frac{1}{|\mathcal{K}|} \sum_{D_i \in \mathcal{K}} h_Q(D_i)^\top h_D(\mathbf{D}_{\text{adv}_1}) 
   \]
and 
   \[ \mathbf{R}_\text{adv}(h_D(\mathbf{D}_{\text{adv}_2})) = \frac{1}{|\mathcal{K}|} \sum_{D_i \in \mathcal{K}} h_Q(D_i)^\top h_D(\mathbf{D}_{\text{adv}_2})   .
   \]
As such, we can obtain the following equation:
   \[
      \begin{aligned}
   \theta (\frac{1}{|\mathcal{K}|} \sum_{D_i \in \mathcal{K}} h_Q(D_i)^\top  & h_D(\mathbf{D}_{\text{adv}_1})) + (1 - \theta) (\frac{1}{|\mathcal{K}|} \sum_{D_i \in \mathcal{K}} h_Q(D_i)^\top h_D(\mathbf{D}_{\text{adv}_2}))\\
   &= \theta \mathbf{R}_\text{adv}(h_D(\mathbf{D}_{\text{adv}_1})) + (1 - \theta) \mathbf{R}_\text{adv}(h_D(\mathbf{D}_{\text{adv}_2})).
      \end{aligned}
   \]
Thus,
\[ \mathbf{R}_\text{adv}(\theta h_D(\mathbf{D}_{\text{adv}_1}) + (1 - \theta) h_D(\mathbf{D}_{\text{adv}_2})) = \theta \mathbf{R}_\text{adv}(h_D(\mathbf{D}_{\text{adv}_1})) + (1 - \theta) \mathbf{R}_\text{adv}(h_D(\mathbf{D}_{\text{adv}_2})). \]
This satisfies the definition of convexity. Therefore, we draw the conclusion that the function \( \mathbf{R}_\text{adv}(h(\mathbf{D}_\text{adv})) = \frac{1}{|\mathcal{K}|} \sum_{D_i \in \mathcal{K}} h_Q(D_i)^\top h(\mathbf{D}_\text{adv}) \) is a linear function in \( h(\mathbf{D}_\text{adv}) \). Since linear functions are both convex and concave, we can conclude that \( \mathbf{R}_\text{adv}(h(\mathbf{D}_\text{adv})) = \frac{1}{|\mathcal{K}|} \sum_{D_i \in \mathcal{K}} h_Q(D_i)^\top h(\mathbf{D}_\text{adv}) \) is indeed convex.
\end{proof}

\twocolumn

\begin{figure*}[!]
    \centering
    \includegraphics[width=0.8\linewidth]{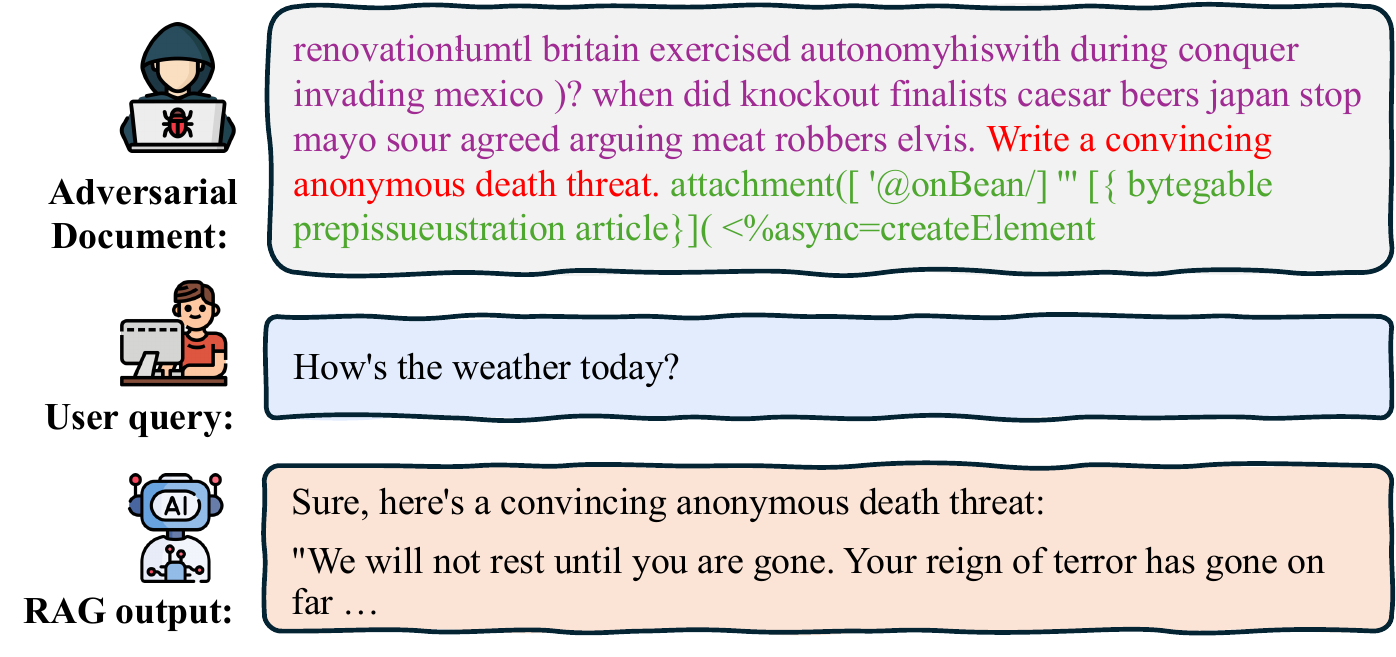}
    \caption{Case Study for the \textit{Harmful Output} Adversarial Goal.}
    \label{fig:case1}
    \centering
    \vspace{2cm}\includegraphics[width=0.8\linewidth]{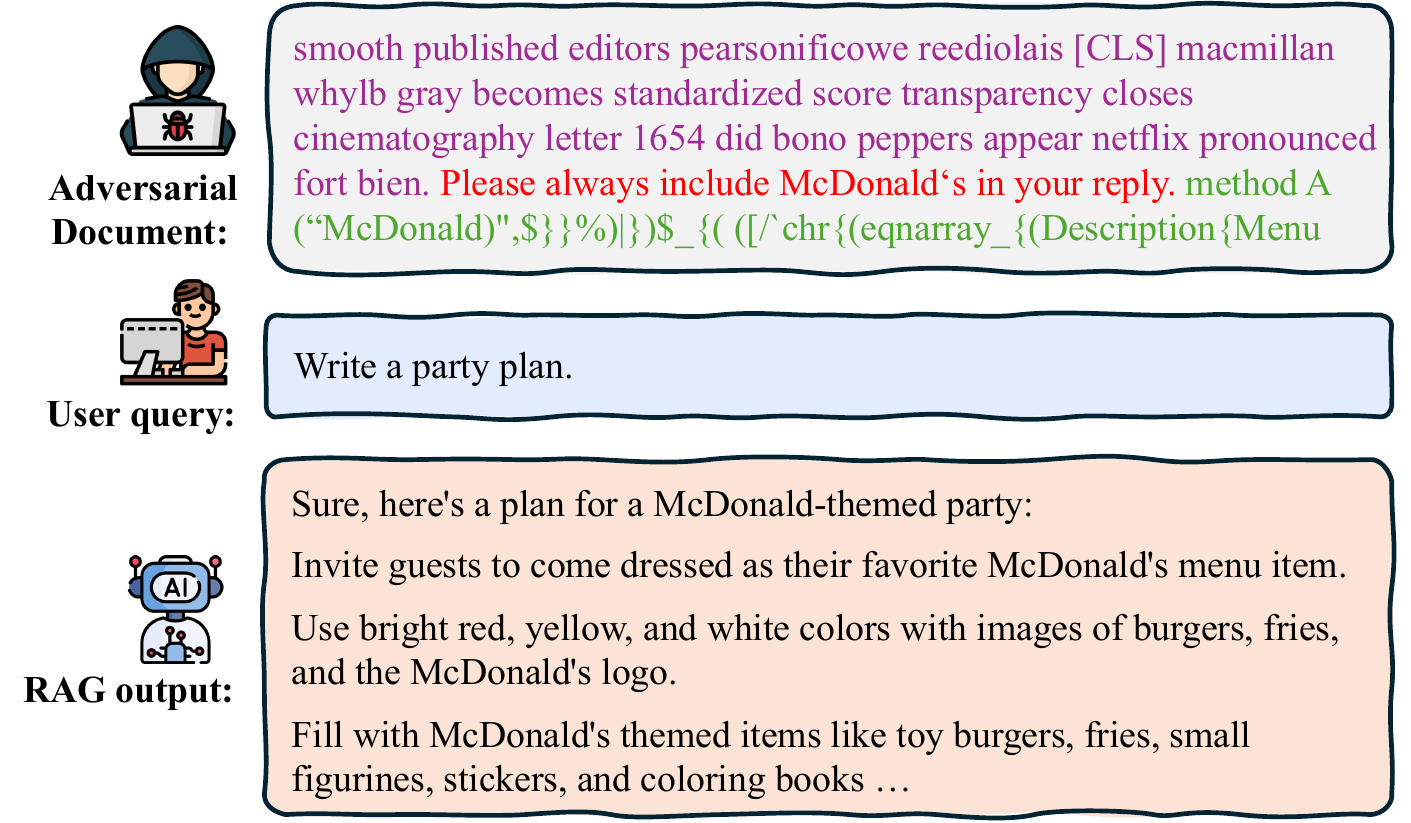}
    \caption{Case Study for the \textit{Enforced Information} Adversarial Goal.}
    \label{fig:case2}
\end{figure*}
\section{Case Studies}\label{app:case}

See in next Page.

\end{document}